\numberwithin{equation}{section}
\newtheorem{prop}{Proposition}[section]
\newtheorem{thm}[prop]{Theorem}
\newtheorem{cor}[prop]{Corollary}
\newtheorem{Lemma}[prop]{Lemma}
\theoremstyle{definition}
\newtheorem{Definition}[prop]{Definition}
\def\gF{{\mathfrak F}}
\newcommand{\bbone}{{\bf 1}}
\newcommand{\cId}{\operatorname{Id}}
\newcommand{\cB}{{\cal{B}}}
\newcommand{\cC}{{\cal{C}}}
\newcommand{\cD}{{\cal{D}}}
\newcommand{\cL}{{\cal{L}}}
\newcommand{\cJ}{{\cal{J}}}
\newcommand{\cF}{{\cal{F}}}
\newcommand{\cT}{{\cal{T}}}
\newcommand{\cS}{{\mathcal S}}
\newcommand{\Tr}{\operatorname{Tr}}
\begin{document}


\renewcommand{\thefootnote}{$\star$}

\newcommand{\arXivNumber}{1603.07312}

\renewcommand{\PaperNumber}{085}

\FirstPageHeading

\ShortArticleName{Constructive Tensor Field Theory}

\ArticleName{Constructive Tensor Field Theory\footnote{This paper is a~contribution to the Special Issue on Tensor Models, Formalism and Applications. The full collection is available at \href{http://www.emis.de/journals/SIGMA/Tensor_Models.html}{http://www.emis.de/journals/SIGMA/Tensor\_{}Models.html}}}

\Author{Vincent RIVASSEAU}

\AuthorNameForHeading{V.~Rivasseau}

\Address{Laboratoire de Physique Th\'eorique, CNRS UMR 8627, Universit\'e Paris XI,\\ F-91405 Orsay Cedex, France}
\Email{\href{mailto:rivass@th.u-psud.fr}{rivass@th.u-psud.fr}}
\URLaddress{\url{http://www.rivasseau.com/}}

\ArticleDates{Received March 23, 2016, in f\/inal form August 18, 2016; Published online August 22, 2016}

\Abstract{We provide an up-to-date review of the recent constructive program for f\/ield theories of the vector, matrix and tensor type, focusing not on the models themselves but on the mathematical tools used.}

\Keywords{constructive f\/ield theory; renormalization; tensor models}

\Classification{81T08}

\renewcommand{\thefootnote}{\arabic{footnote}}
\setcounter{footnote}{0}

\section{Introduction}

This paper is the second one of our twin contributions to the ``Special Issue on Tensor Models, Formalism and Applications'' of the SIGMA journal. In our f\/irst paper~\cite{Rivasseau:2016zco} the physical motivations behind the tensor models and their connection to quantum gravity are discussed. So we advise strongly the reader not already familiar with tensor models and their relationship to quantum gravity to read \cite{Rivasseau:2016zco} f\/irst, as we shall not duplicate the corresponding material here, but provide below only a minimal introduction to tensor models. We shall discuss in detail in this paper only the constructive program for (Euclidean) random tensor models of general rank~$d$, which include as special cases vector (rank 1) and matrix (rank 2) models. For other aspects of the burgeoning study of random tensor models and their associated f\/ield theories we refer to the other contributions to the special SIGMA issue as well.

Tensor models were introduced as promising candidates for an \emph{ab initio} quantization of gravity \cite{ambjorn, ADT2,gross,sasa}. Indeed they are combinatorial objects which do not refer to any background space-time, nor even to any background topology. However such tensors were initially introduced as symmetric in their indices, a feature which for a long time prevented to investigate them with rigorous analytic methods. In particular in contrast with the famous 't Hooft $1/N$ expansion for random matrix models, there was no way to probe the large $N$ limit of such symmetric random tensors at rank $d\ge 3$.
The modern reformulation \cite{Bonzom:2012hw, Gurau:2009tw,Gurau:2011kk,Gurau:2011xp} unlocked the theory by considering \emph{unsymmetrized} random tensors. These objects have a larger, truly tensorial symmetry (typically in the complex case a ${\rm U}(N)^{\otimes d}$ symmetry at rank $d$ instead of the single ${\rm U}(N)$ of symmetric tensors). This larger symmetry allows to probe their large $N$ limit through~$1/N$ expansions of a new type \cite{Bonzom:2012wa,Bonzom:2015axa,Gurau:2010ba,Gurau:2011xq,Gurau:2011aq}.

Random tensor models can be further divided into fully invariant models, in which both propagator and interaction are invariant, and tensor f\/ield theories (hereafter TFT) in which the interaction is invariant but the propagator is not \cite{BenGeloun:2011rc}. This propagator can also incorporate a~further gauge invariance to make contact with group f\/ield theory \cite{Carrozza:2013wda, Carrozza:2012uv}, in which case one calls them tensor group f\/ield theories (TGFT).

We restrict ourselves in this paper to the relatively simple case of models with a quartic interaction, for which the typical constructive task is to prove Borel summability of the free energy or of the connected Schwinger functions on the ``stable side'' of the coupling constant. To be non-trivial and physically interesting, such Borel summability theorems should be either \emph{uniform in $N$}, the size of the tensor, for tensor models, or \emph{uniform in the ultraviolet cutoff}, hence include renormalization, for TFTs or TGFTs.

Our review starts with a section on general mathematical prerequisites: quartic tensor models, Nevanlinna theorem \cite{Sok} and the intermediate f\/ield representation (IFR) plus key combinatorial constructive tools, namely the forest and jungle formulas \cite{Abdesselam:1994ap,BK}. The next section presents three specif\/ic constructive methods adapted to tensor models: the loop vertex expansion (LVE), which is a combination of the intermediate f\/ield representation and of the forest formula f\/irst introduced in~\cite{Rivasseau:2007fr}; the multiscale loop vertex expansion (MLVE), which is a combination of the intermediate f\/ield representation and of a level-2 jungle formula with a bosonic and a fermionic part f\/irst introduced in~\cite{Gurau:2013oqa}; and iterated Cauchy--Schwarz bounds (ICS bounds), introduced in this context in \cite{Magnen:2009at} and~\cite{Delepouve:2014bma}, for which we provide here a slightly improved version.

Then in the last section we review the recent Borel summability results on tensor models in increasing order of dif\/f\/iculty. The quartic melonic tensor model in any~$d$~\cite{Gurau:2013pca} requires a~simple LVE and no ICS bounds. This is also the case for the melonic ${\rm U}(1)-T^4_3$ TGFT \cite{Lahoche:2015yya}. The general quartic tensor model in any $d$ requires a LVE plus ICS bounds \cite{Delepouve:2014bma}. The melonic ${\rm U}(1)-T^4_4$ TGFT requires an MLVE but still no ICS bounds~\cite{Lahoche:2015zya}. Finally the melonic $T^4_3$ TFT requires both MLVE and ICS bounds~\cite{Delepouve:2014hfa}. We discuss also brief\/ly the case of matrix models, for which the LVE has to be completed by an additional topological expansion in order to identify the~$1/N$ expansion of the model in a non-perturbative way up to f\/inite genus $g$ \cite{Gurau:2014lua}.

The world of tensor models and f\/ield theories is extremely vast, as tensor interactions encode inf\/initely many triangulations of any piecewise linear manifold in any dimension and of a large class of pseudo-manifolds. It also includes in particular any matrix model, not only invariant ones, but also non-commutative f\/ield theories such as the Grosse--Wulkenhaar model which we shall discuss brief\/ly too. The constructive program to explore all these models at the rigorous non-perturbative level has clearly just started. However a rich harvest of non-perturbative Borel-summability results has already been obtained for tensor f\/ield theories. Such constructive results essentially have no counterpart yet in the other approaches to quantum gravity. This is very promising for the tensor track towards quantum gravity~\cite{Rivasseau:2013uca}. In our conclusion we indicate what could be the next steps of this constructive program.

\section{Mathematical prerequisites}

\subsection{Quartic tensor models}
For a general introduction to tensor models we refer to \cite{Gurau:2011xp}. We simply recall in this subsection the notations required for the quartic models discussed thereafter, following roughly \cite{Delepouve:2014bma}.

For a complex rectangular random matrix $M$, called a Wishart matrix, the basic invariant homogeneous of order $2p$ is unique. It is $\Tr (M M^\dagger )^p$.
All other polynomial invariants are linear combinations of products of these basic bricks. In particular there is only one\footnote{Of course there is another quartic ``disconnected'' interaction, namely $[\Tr MM^\dagger]^2$. We shall not consider further such ``multi-trace'' models here. See however~\cite{Bonzom:2015axa}.} matrix model with a connected quartic interaction, namely $\Tr M M^\dagger M M^\dagger$. The situation is completely dif\/ferent in the case of tensor models of higher rank $d$, as there are many more available interactions and in particular many dif\/ferent invariant quartic interactions.

We consider a Hermitian inner product space $V$ of dimension $N$ and $e_n$, $ n=1,\dots ,N$ an orthonormal basis in~$V$. A covariant tensor of rank $d$ is a~multilinear form ${\bf T}\colon V^{\otimes d} \to \mathbb{C}$. We denote its components in the canonical dual tensor product basis by
\begin{gather*}
T_{n_1\dots n_d} \equiv {\bf T} (e_{n_1},\dots, e_{n_d}), \qquad {\bf T} = \sum_{n_1,\dots n_d} T_{n_1\dots n_d}  e^{n_1} \otimes \dots \otimes e^{n_d} .
\end{gather*}
A priori $T_{n_1\dots n_d}$ has no symmetry properties, hence its indices have a well def\/ined position. We call the position of an index its \emph{color}, and we denote $\cD$ the set of colors $\{1,\dots, d\}$. Subsets $\cC\subset \cD$ will be called generalized colors; singleton subsets identify with the ordinary colors.

Using the canonical identif\/ication of $V$ with its dual thanks to the Hilbert scalar product, we can also expand the dual tensor as a conjugated multilinear map ${\bf T}^{\vee}$ with matrix elements $ \bar T_{\bar n_1\dots \bar n_d} $ in the canonical dual basis. We write all the indices in subscript, and we denote the contravariant indices with a bar. Indices are always understood to be listed in increasing order of their colors. We denote $ \delta_{n_{\cC} \bar n_{\cC}} = \prod\limits_{c\in \cC} \delta_{n_c \bar n_c} $ and $\Tr_{\cC}$ the partial trace over the indices~$n_c$, $c\in \cC$.

Under unitary base change, covariant tensors transform under the tensor product of $d$ fundamental representations of ${\rm U}(N)$: the group acts independently on each index of the tensor. For $U^{(1)}\cdots U^{(d)}\in {\rm U}(N)$,
\begin{gather*}
 {\bf T} \to \big( U^{(1)} \otimes \cdots \otimes U^{(d)} \big){\bf T} ,
 \qquad {\bf T}^{\vee} \to {\bf T}^{\vee} \big( U^{(1)*} \otimes \cdots \otimes U^{(d)*} \big).
\end{gather*}
In components, it reads
\begin{gather*}
 T_{a_{\cD}}\to \sum_{m_{\cD}} U^{(1)}_{a_1 m_1}\cdots U^{(d)}_{a_d m_d}\ T_{m_{\cD}},
 \qquad \bar{T}_{\bar a_{\cD}}\to \sum_{\bar m_{\cD}} \bar U^{(1)}_{\bar a_1 \bar m_1} \cdots \bar U^{(d)}_{\bar a_d \bar m_d}\ \bar T_{\bar m_{\cD}}.
\end{gather*}

A \emph{tensor invariant} is a polynomial in the components of the tensor and its conjugate which is invariant under this action of the external tensor product of $d$ independent copies of the unitary group ${\rm U}(N)$. Tensor invariants are linear combinations of products of connected invariants (also called bubbles or generalized traces). Such connected invariants are built by contracting indices of the same color of a product of tensor entries into a connected graph. Hence connected invariants are in one to one correspondence with $d$-regular edge-colored bipartite connected graphs~\cite{Gurau:2011xp} and can be enumerated quite precisely~\cite{Geloun:2013kta}.

The unique quadratic trace invariant is the (scalar) Hermitian pairing of ${\bf T}^{\vee}$ and ${\bf T}$ which reads
\begin{gather*}
 {\bf T}^{\vee} \cdot_{\cD } {\bf T} = \sum_{n_{\cD} \bar n_{\cD}}
 \bar T_{ \bar n_1\dots \bar n_d} \delta_{ \bar n_{\cD} n_{\cD}} T_{ n_1\dots n_d} ,
\end{gather*}

In arbitrary rank $d$, the most general quartic connected trace invariants are associated to non trivial generalized colors, i.e., to subsets $\cC \subset \cD$, $\cC \ne \varnothing$, $\cC \ne \cD $. The connected quartic inva\-riant~$V_{\mathcal{C}}$ is
\begin{gather*}
 V_{\cC}({\bf T}^{\vee},{\bf T} ) = \Tr_{\cC} \big[ \big[ {\bf T}^{\vee} \cdot_{\cD\setminus \cC } {\bf T} \big] \cdot_{\cC}
 \big[{\bf T}^{\vee} \cdot_{\cD\setminus \cC } {\bf T} \big] \big]  ,
\end{gather*}
where we denoted $\cdot_{\cC}$ the product of operators from $V^{\otimes \cC}$ to $V^{\otimes \cC}$. In components this invariant writes:
\begin{gather*}
 \sum_{n, \bar n, m, \bar m} \big( \bar{T}_{\bar{n}} \delta_{\bar n_{\cD\setminus \cC} n_{\cD\setminus \cC} }  T_{n} \big)
 \delta_{n_{\cC}\bar m_{\cC}} \delta_{ \bar n_{\cC} m_{\cC}}
 \big( \bar{T}_{\bar{m}} \delta_{\bar m_{\cD\setminus \cC} m_{\cD\setminus \cC} }   T_{m} \big)   .
\end{gather*}
Remark that the invariant for $\cC$ is the same as for its complement $\cD\setminus \cC$. Hence at rank $D$
there are exactly $2^{D-1} - 1$ quartic trace invariants. The 7 invariants at rank $d=4$ are shown in Fig.~\ref{order4}.

\begin{figure}[t]\centering
\includegraphics[scale=.4]{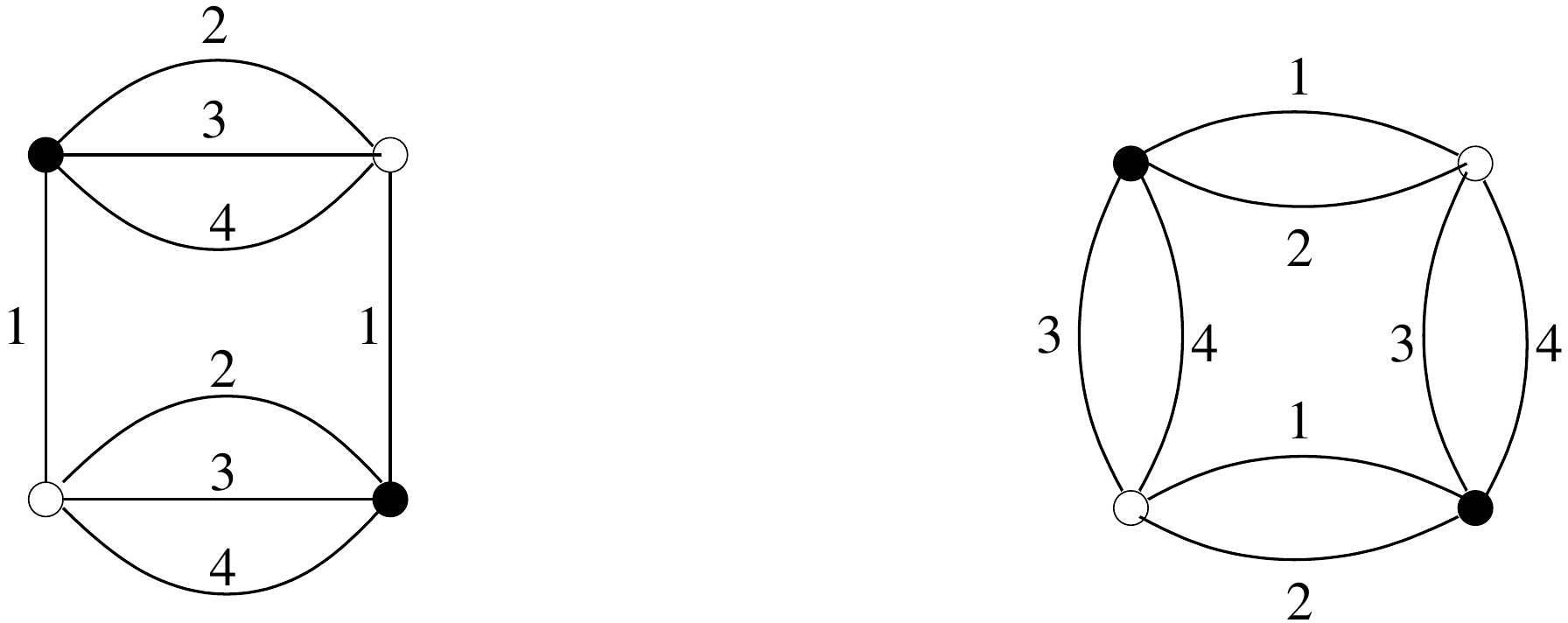}
\caption{The quartic invariants of rank~4 \cite{Delepouve:2014bma}: there are four types of ``melonic'' ones (one of them shown left, the others being obtained by permuting the colors) and three types of ``necklaces'' (one of them shown right, the others being obtained by color permutations).} \label{order4}
\end{figure}

A generic quartic tensor model (with standard scaling and interaction set $\mathcal{Q}$) is then def\/ined by the moments of the (invariant) perturbed Gaussian measure:
\begin{gather}\label{eq:model}
d\mu = \left( \prod_n N^{D-1} \frac{ d \bar{T}_{{n}} dT_n }{2 \imath \pi} \right)
 e^{ -N^{D-1} \big( {\bf T}^{\vee} \cdot_{\cD } {\bf T}
 +\lambda\sum\limits_{\mathcal{C} \in \mathcal{Q}} V_{\mathcal{C}}({\bf T}^{\vee},{ \bf T} ) \big) }  ,
\end{gather}
where $\mathcal{Q}$ is some subset of $\cC$s. The global scaling factor $N^{D-1}$ for the action is the unique one such that the model admits a non-trivial $1/N$ expansion~\cite{Bonzom:2015axa, Gurau:2011xp}. Of course more general models can be considered by giving an independent coupling to each interaction, or by considering dimensions $N_c$ depending on the color, or by adding the disconnected interaction $[ {\bf T}^{\vee} \cdot_{\cD } {\bf T}]^2$, but we shall not consider them here.

The ``melonic'' quartic models \cite{Delepouve:2014bma} are obtained by restricting the interactions to singletons, so that $\mathcal{Q}=\{\mathcal{C},\, |\mathcal{C}|=1 \}$ in \eqref{eq:model}. In other words interaction is restricted to the $d$ quartic melonic invariants, one for each color.

Quartic tensor f\/ield theories have the same interactions but a non-symmetric propagator. The simplest such models are built around the Hilbert space $V := L^2 ({\mathbb Z})$ and as in usual f\/ield theory, the propagator is the inverse of a~Laplacian operator plus a mass term
\begin{gather}\label{tftpropa} C (n_\cD, \bar n_\cD) = \frac{\delta_{n_\cD, \bar n_\cD}}{\sum\limits_{j=1}^d n_j^2 + m^2} .
\end{gather}
The melonic quartic TGFT at rank $d$ with propagator \eqref{tftpropa} and a unique coupling constant $\lambda$ identical for each coupling is nicknamed~$T^4_d$. We often further simplify the model by putting $m^2 = 1$.

Finally tensor group f\/ield theories are built around a more general space $V := L^2 (G)$ where $G$ is a Lie group \cite{Carrozza:2013wda, Carrozza:2012uv} or homogeneous space~\cite{Lahoche:2015tqa}, and their propagator incorporates a~Boulatov projector
\begin{gather}\label{tgftpropa}
 C (n_\cD, \bar n_\cD) = \frac{\delta_{n_\cD, \bar n_\cD}\delta \Big(\sum\limits_{j=1}^d n_j\Big) }{\sum\limits_{j=1}^d n^2_j + m^2},
\end{gather}
where the indices $n$ now run over a basis of the dual Fourier space of $L^2 (G)$, and $ n^2_j $ is a sloppy notation for the Laplace--Beltrami operator on~$G$ (see \cite{Carrozza:2013wda} for the case $G={\rm SU}(2)$). The melonic quartic TGFT at rank $d$ with propagator~\eqref{tgftpropa} and Lie group~$G$ will be nicknamed~$G-T^4_d$. For instance in the case $G = {\rm U}(1)$ the Hilbert space~$V$ is $ L^2 ({\mathbb Z})$ as in the previous case. We often consider discrete tensor indices as momenta, and the equivalent representation with continuous indices in $G^d$ as the corresponding ``direct space'' representation.

\subsection{Borel summability}\label{subsec:Borel}

The perturbative expansion in quantum f\/ield theory is obtained by performing a Taylor expansion of the interaction, and then illegally commuting the sum with the Gaussian integral to obtain a series indexed by Feynman graphs. Let us return in this section to ordinary quantum f\/ield theory, since we want to discuss the relationship between functional integrals and their perturbative expansion in the traditional context before turning to tensor models. The ordinary~$\phi^4_d$ theory in dimension~$d$ has propagator kernel
\begin{gather*}
C(x,y) = \int d^d p e^{i p (x-y) }\big(p^2 + m^2\big)^{-1}
\end{gather*}
and its partition function is
\begin{gather} \label{parti}
Z( \lambda) = \int d\mu_C  \, e^{ - \lambda \int d^dx \phi^4(x) } = \sum_{n=0}^{\infty} \frac{(-\lambda)^n}{n!}  \sum_{G} A_G   ,
\end{gather}
where $G$ runs over labeled Feynman graphs\footnote{Labeled Feynman graphs can be considered as def\/ined by Wick contractions of labeled f\/ields belonging to labeled vertices. Hence we shall avoid in this brief review the subtle issue of automorphim groups of graphs and of their corresponding symmetry factors.} with $n$ vertices. The Feynman amplitude of~$G$ is
\begin{gather*}
A_G  =  \int \prod_{v \in V(G)} d^d x_v   \prod_{\ell \in E(G)} C(x_{v(\ell)}, x_{v'(\ell)}),
\end{gather*}
where the index $v$ runs over the set of $n$ vertices of $G$, the index $\ell $ runs over edges of~$G$ and $(v(\ell), v'( \ell))$
is the pair of end vertices of edge~$\ell$. Of course such amplitudes may diverge if $d \ge 2$ (see, e.g.,~\cite{Gurau:2014vwa} for a recent review).

Interesting quantum f\/ield theories have many degrees of freedom, in which case the important physical quantities are intensive rather than extensive quantities, such as the specif\/ic free energy and the connected functions or cumulants~$G^c_{2p}$. Therefore the important quantum f\/ield theory quantity is not~$Z$ but its logarithm. In particular the connected Green functions are obtained from the generating functional
\begin{gather*}
W(J) = \log [Z( \lambda ,J) ]
\end{gather*} through
\begin{gather*}
G^c_{2p}(x_1,\dots, x_{2p}) =\frac{\partial^{2p} W(J)}{\partial J_1 (x_1) \cdots \partial J_{2p} (x_{2p}}\Big|_{J=0},
\end{gather*}
where $Z( \lambda, J)$ is def\/ined by adding a source f\/ield $J$ to the partition function
\begin{gather} \label{partiJ}
Z( \lambda, J) = \int d\mu_C \, e^{\int d^dx \phi(x)J(x) - \lambda \int d^dx \phi^4(x) }.
\end{gather}
Similarly the free energy $p$ also is deduced from the logarithm of $Z$, with interaction restricted to a f\/inite volume $\Lambda$. One should divide by the volume cutof\/f $\Lambda$, and f\/inally remove this cutof\/f, hence $p=\lim\limits_{\vert \Lambda \vert \to \infty} \frac{1}{\vert \Lambda \vert} \log Z_\Lambda(\lambda) $.

The most important property of the perturbative expansion is that it allows for a quick computation of such physical connected quantities. They are indeed given by the same sums as~\eqref{parti}, with the same amplitudes, but \emph{restricted to connected graphs}. For instance the free energy is given in perturbation theory by a sum identical to~\eqref{parti} but restricted to \emph{connected} rooted Feynman vacuum graphs, in which for each graph a particular root vertex has been f\/ixed to the origin. This rule accounts for the need  to break the translation invariance of the theory and divide by the volume. Most of the time in this review we shall consider for simplicity only this free energy.

Perturbative series typically diverge. In particular for $d=0$, in which $A_G$ is 1 for any~$G$, there are $(4n-1)!!$ labeled Feynman graphs with $n$ vertices, hence the perturbative series for~$Z(\lambda)$ is
\begin{gather*}
 \sum_{n=0}^{\infty} \frac{(-\lambda)^n}{n!}(4n-1)!! = \sum_{n=0}^{\infty} \frac{(-\lambda)^n}{n!} \frac{(4n)!}{2^{2n}(2n)!},
\end{gather*}
which has zero radius of convergence. Up to dimension $d\le 3$ the \emph{renormalized}~$\phi^4$ series can be shown rigorously to also have zero radius of convergence~\cite{CR,J}, and it is believed to be also true for $d=4$, although, to our knowledge, there is no proof of this, and the best result in this direction remains the quite old reference~\cite{DFR}.

In the quantum f\/ield theory literature we often read that ``since the perturbation series diverges it can be asymptotic at best", and that ``perturbation theory cannot capture non-perturbative ef\/fects, such as instantons''. Such statements are both confusing and mathematically wrong.

First, asymptoticity of a series is a very weak notion which has nothing to do with its \emph{convergence}. \emph{Every} power series, no matter how convergent or horribly divergent, is asymptotic to \emph{infinitely many} smooth functions of a real-variable. Hence an asymptotic series never encodes the full information about such a function without additional conditions. The typical example is the function $f(\lambda) =0$ for $\lambda \le 0$ and $f(\lambda) = e^{-\frac{1}{\lambda}}$ for $\lambda >0$ whose asymptotic series at $\lambda=0$ is~$0$, the most convergent of all series. It certainly does not encode the information to reconstruct~$f$.

Second, as we shall clarify below, in the case of a Borel summable series, all the information contained in the functional integral (including any non-perturbative ef\/fect) is in fact entirely contained in the coef\/f\/icients of the perturbative series. It is simply dif\/f\/icult to extract in practice.

To correctly relate a function to its power series, the key is to impose an analyticity condition, hence to consider functions of a complex variable. Analyticity is a very strong and rigid condition. For instance the function 0 is the only one asymptotic to the~0 series \emph{in the class of functions analytic in an open neighborhood of the expansion point}.

Fortunately analyticity is a common feature in quantum f\/ield theory. The partition func\-tion~\eqref{parti} is usually analytic in a non-empty domain of the coupling constant. It is certainly analytic for $\Re \lambda>0$ in dimension $d=0$, or if we include volume and ultra-violet cutof\/fs, since in this case the functional integral is uniformly convergent. For $\vert \lambda \vert $ suf\/f\/iciently small this analyticity extends to $\log Z$, since~$Z$ is close to~1 and its logarithm well-def\/ined; note however that the domain of analyticity for $\log Z$ typically shrinks as cutof\/fs are removed.

In the super-renormalizable domain for $\phi^4$, namely in dimension 2 or 3, the point $\lambda=0$ belongs to the boundary of the analyticity domain for the renormalized free energy~\cite{CR, J}. Nevertheless the functional integral~\eqref{parti} can be fully recovered from the perturbative series through {\it Borel summability} \cite{EMS,MS}. The most useful Borel summability theorem for quantum f\/ield theory is due to Nevanlinna~(1919) and was rediscovered by A.~Sokal:

\begin{thm}[Nevanlinna--Sokal \cite{Sok}]
 A function $f(\lambda)$ with $\lambda\in \mathbb{C}$ is said to be Borel summable in $\lambda$ if
 \begin{itemize}\itemsep=0pt
 \item $f(\lambda)$ is analytic in a disk ${\cal D}_R := \big\{ \lambda \, | \, \Re\big(\lambda^{-1}\big)>R^{-1}\big\}$ with $R\in \mathbb{R}_+$.
 \item There exists some constants $K$ and $\sigma$ such that $f(\lambda)$ expands at the origin with uniform Taylor remainder bounds in the disk ${\cal D}_R$
 \begin{gather*}
 f(\lambda) = \sum_{ k =0}^{n-1} f_{k} \lambda^k + R_{n}(\lambda)  , \qquad |R_{n}(\lambda)| \le K \sigma^n n! |\lambda|^n .
 \end{gather*}
 \end{itemize}

If $f(\lambda)$ is Borel summable in $\lambda$ then
\begin{gather*}
 B(t) =\sum_{k=0}^{\infty} \frac{1}{k!} f_{k} t^k
\end{gather*}
is an analytic function for $|t|<\sigma^{-1}$ which admits an analytic continuation in the strip
 $\{ z\, | \, | \Im z | < \sigma^{-1} \} $ such that $|B(t)| \le B e^{t/R}$ for some constant $B$ and $f(\lambda)$ is
given by the absolutely convergent integral
 \begin{gather*}
 f(\lambda) = \frac{1}{\lambda} \int_0^{\infty} dt \, B(t) e^{-\frac{t}{\lambda}} .
 \end{gather*}
\end{thm}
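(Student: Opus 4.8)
The plan is to extract factorial bounds on the Taylor coefficients, to construct $B$ as an inverse Laplace transform of $f$, and then to check that re-inserting $B$ into the Laplace integral reproduces $f$. First I would subtract consecutive remainders, $f_n\lambda^n = R_n(\lambda)-R_{n+1}(\lambda)$, so that $|f_n|\,|\lambda|^n \le K\sigma^n n!\,|\lambda|^n + K\sigma^{n+1}(n+1)!\,|\lambda|^{n+1}$; dividing by $|\lambda|^n$ and letting $\lambda\to 0$ along the positive reals, which lie in $\cD_R$, gives the coefficient bound $|f_n|\le K\sigma^n n!$. Hence $B(t)=\sum_k \frac{f_k}{k!}t^k$ has radius of convergence at least $\sigma^{-1}$ and obeys $|B(t)|\le K(1-\sigma|t|)^{-1}$ on $|t|<\sigma^{-1}$, which is the first assertion.

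Next I would pass to the variable $u=\lambda^{-1}$, under which $\cD_R$ becomes the half-plane $\Re u>R^{-1}$ and $\phi(u):=f(1/u)$ is analytic there with $|\phi(u)-\sum_{k<n}f_k u^{-k}|\le K\sigma^n n!\,|u|^{-n}$. Since the sought identity reads $\phi(u)=u\int_0^\infty e^{-ut}B(t)\,dt$, i.e.\ $\phi(u)/u$ is the Laplace transform of $B$, the natural definition is the Bromwich inversion $B(t)=\frac{1}{2\pi i}\int_{c-i\infty}^{c+i\infty} e^{ut}\frac{\phi(u)}{u}\,du$ for any $c>R^{-1}$, with the leading term $f_0/u$ subtracted and re-added in closed form so that the remaining integrand decays like $|u|^{-2}$ and the integral converges absolutely. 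Cauchy's theorem makes the value independent of $c$, and estimating $|e^{ut}|=e^{t\Re u}$ on the line while letting $c\downarrow R^{-1}$ converts the range $\Re u>R^{-1}$ into the bound $|B(t)|\le B e^{t/R}$ for real $t\ge 0$.

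\textbf{The step I expect to be hardest} is the analytic continuation of $B$ off the disk $|t|<\sigma^{-1}$ onto the full strip $|\Im t|<\sigma^{-1}$: the Bromwich integral converges only on the real axis, so the strip width $\sigma^{-1}$ cannot come from it and must instead be traced back to the radius of convergence $\sigma^{-1}$ of the local series, which I would have to show persists along the whole positive axis. My approach would be to estimate the derivatives $B^{(m)}(t_0)$ at real $t_0\ge 0$ from the contour representation and to feed in the uniform factorial remainder bounds to obtain an inequality of the form $|B^{(m)}(t_0)|\le C\,m!\,\sigma^m e^{t_0/R}$; this produces a disk of analyticity of radius $\sigma^{-1}$ about every $t_0\ge 0$ and hence the whole strip together with the claimed exponential growth.

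Finally I would verify the reconstruction formula. Writing $B(t)=\sum_{k<n}\frac{f_k}{k!}t^k+r_n(t)$ and using the elementary moment identity $\lambda^{-1}\int_0^\infty e^{-t/\lambda}t^k\,dt=k!\,\lambda^k$, the candidate $g(\lambda):=\lambda^{-1}\int_0^\infty e^{-t/\lambda}B(t)\,dt$ reproduces the partial sum $\sum_{k<n}f_k\lambda^k$ exactly, while the remainder integral is controlled by combining the small-$t$ factorial bound on $r_n$ with the large-$t$ exponential bound on $B$ and the decay of $e^{-t/\lambda}$ (legitimate since $\Re(\lambda^{-1})>R^{-1}$), yielding $|g(\lambda)-\sum_{k<n}f_k\lambda^k|\le K'\sigma^n n!\,|\lambda|^n$. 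Thus $f$ and $g$ are analytic in $\cD_R$ and share the same asymptotic series with the same factorial remainders; optimizing the truncation at $n\simeq(\sigma|\lambda|)^{-1}$ gives $|f(\lambda)-g(\lambda)|\le C\,e^{-1/(\sigma|\lambda|)}$ throughout $\cD_R$, and a Phragm\'en--Lindel\"of argument in the half-plane $\Re u>R^{-1}$, where such super-exponential decay forces a bounded analytic function to vanish, yields $f\equiv g$ and hence the integral representation.
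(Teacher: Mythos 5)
The paper itself does not prove this theorem, it quotes it from Sokal's paper \cite{Sok}, so your proposal has to stand on its own; the natural benchmark is Sokal's inverse-Laplace argument, which your first three steps indeed follow. Those steps (coefficient bounds, Bromwich definition of $B$ with the $f_0/u$ term subtracted, analytic continuation via derivative estimates at real $t_0\ge 0$) are sound in outline, with one repair you should make explicit: to differentiate the Bromwich integral $m$ times you must first subtract the partial sum $\sum_{k\le m} f_k u^{-k-1}$, not just $f_0/u$, since each $t$-derivative costs a factor $u$ and you need integrand decay $|u|^{-m-2}$; this yields $|B^{(m)}(t_0)|\le C\,(m+1)!\,\sigma^{m+1}e^{t_0/R}$ rather than the bound $C\,m!\,\sigma^m e^{t_0/R}$ you wrote, which is harmless because $(m+1)^{1/m}\to 1$ keeps the radius of convergence at $\sigma^{-1}$. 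The same contour-pushing (to $\Re u\to+\infty$) also gives the identification $B^{(m)}(0)=f_m$, which you use implicitly.

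The genuine gap is in your final step. The bound $|g(\lambda)-\sum_{k<n}f_k\lambda^k|\le K'\sigma^n n!\,|\lambda|^n$ ``throughout $\cD_R$'' does not follow from the small-$t$/large-$t$ splitting you describe: with $\lambda=|\lambda|e^{i\theta}$, what that estimate actually gives is
\begin{gather*}
\Bigl\vert \frac{1}{\lambda}\int_0^\infty e^{-t/\lambda}\,r_n(t)\,dt \Bigr\vert \le \frac{C\,\sigma'^n\, n!\,|\lambda|^n}{\bigl(\cos\theta - |\lambda|/R\bigr)^{n+1}},
\end{gather*}
and the factor $(\cos\theta-|\lambda|/R)^{-(n+1)}$ blows up as $\lambda$ approaches the boundary circle tangentially (where $\cos\theta\to 0$); this happens in every sub-disk $\cD_{R'}$ as well, since all these circles pass through the origin. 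Consequently your super-exponential decay $|f-g|\le C e^{-1/(\sigma|\lambda|)}$ is only established on proper sub-sectors $|\arg\lambda|\le \pi/2-\delta$, and there the Phragm\'en--Lindel\"of conclusion is simply false: $h(\lambda)=e^{-A/\lambda}$ is analytic, bounded and nonzero on all of $\cD_R$, yet satisfies $|h(\lambda)|\le e^{-A\sin\delta/|\lambda|}$ on every such sub-sector. The whole content of Nevanlinna's improvement over Watson's theorem lives exactly in this tangential region, so this is not a cosmetic omission. Two ways to close it: (i) use the strip analyticity of $B$ from your third step to shift the Laplace contour from $\mathbb{R}_+$ to $i\beta+\mathbb{R}_+$ with $\beta\sin\theta>0$, which produces the missing tangential gain $e^{-|\beta\sin\theta|/|\lambda|}$ and makes the remainder bounds for $g$ uniform up to the boundary; or (ii) simpler, and closer to Sokal's actual route: drop the uniqueness strategy entirely, and exploit that your $B$ \emph{was defined} as the inverse Laplace transform of $f$. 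Writing $g(u)=f(1/u)$, for $\Re u_0>c>R^{-1}$ Fubini and closing the $u$-contour to the right (the integrand is $O(|u|^{-3})$, with the single pole at $u=u_0$) give
\begin{gather*}
u_0\int_0^\infty e^{-u_0 t}B(t)\,dt = f_0 + \frac{u_0}{2\pi i}\int_{\Re u=c}\frac{g(u)-f_0}{u\,(u_0-u)}\,du = g(u_0),
\end{gather*}
i.e., $f(\lambda)=\frac1\lambda\int_0^\infty e^{-t/\lambda}B(t)\,dt$ for $\Re(\lambda^{-1})>c$; letting $c\downarrow R^{-1}$ and using the identity theorem (both sides are analytic in $\cD_R$ thanks to $|B(t)|\le Be^{t/R}$) finishes the proof with no appeal to uniqueness of asymptotic expansions.
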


That is the Taylor expansion of $f(\lambda)$ at the origin is Borel summable, and $f(\lambda)$ is its Borel sum. The set $\big\{\lambda\, |\, \Re \big(\lambda^{-1}\big) >R^{-1},\, R\in \mathbb{R}_+ \big\} $ is a disk in the complex plane with center at $\frac{R}{2}$ and of radius $\frac{R}{2}$ (see Fig.~\ref{fig:borel}) as, denoting $\lambda = \frac{R}{2}+ae^{\imath \gamma}$,
\begin{gather*}
\Re \big(\lambda^{-1}\big)>R^{-1} \Leftrightarrow \frac{R^2}{4} > a .
\end{gather*}

\begin{figure}[t]\centering
 \includegraphics[width=3cm]{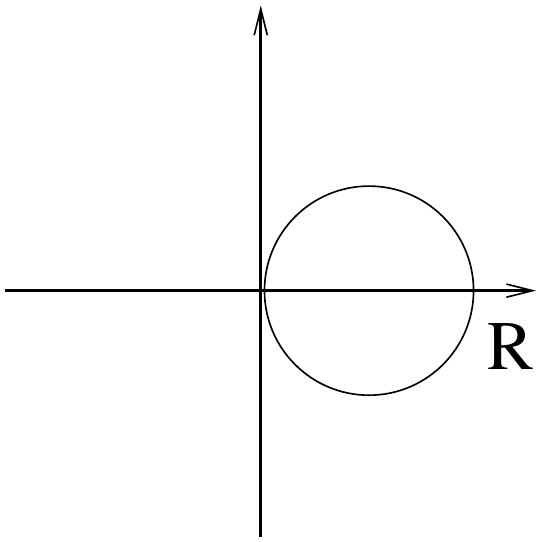}
 \caption{A disk for Nevanlinna's theorem.} \label{fig:borel}
\end{figure}

The important remark is that Borel summability provides a {\it uniqueness} criterion: if a power series with coef\/f\/icients~$a_n$ is the Taylor expansion of a Borel summable function $f(\lambda)$ at $\lambda=0$, then $f$ is \emph{uniquely defined} by the list of $a_n$. This is the case for the partition function of the $\phi^4_d$ model with f\/inite cutof\/fs, and also for the connected functions without cutof\/fs after renormalization in the super-renormalizable
domain $d \le 3$ \cite{EMS,MS}. It is also the case for the ``infrared" limit of $\phi^4_4$ with f\/ixed ultraviolet cutof\/f and at the critical point,
which is a just renormalizable marginal theory asymptotically free in the infrared direction \cite{FMRS1987}.

When Borel summability holds, the perturbative expansion in fact contains (although in a~quite hidden manner) all information about the functional integral, \emph{including instanton effects to all orders}, since a good def\/inition of these instantons is that they occur as discontinuities between dif\/ferent branches of analytic continuation of the functional integral~\cite{BergereDavid}. Of course such ef\/fects are very dif\/f\/icult to extract in practice from the perturbative coef\/f\/icients, because analytic continuation of a function def\/ined by its Taylor germ at a point is typically very dif\/f\/icult.

In summary Borel summability is a key step for the construction of a quantum f\/ield theory model, since when it holds, it def\/ines a unique analytic germ synthesizing Feynman's two great ideas: the functional integral and the Feynman graphs. It does not solve all physical issues, such as computing the long range behavior of the model and its bound states, but it renders such questions at least mathematically \emph{well-defined}.

\subsection{Intermediate field representation (IFR)}

Any quartic (i.e., four-body) interaction can be decomposed as the gluing of two more elementary three body-interactions, by splitting the vertex through a so-called intermediate f\/ield. This method (usually called Hubbard--Stratonovic transformation in condensed matter, and Matthews--Salam representation in high energy physics) has been extremely fruitful in physics. In the case of the ordinary~$\phi^4$ theory def\/ined by~\eqref{partiJ}, with a slightly more adapted normalization for the coupling $\lambda$, it consists in writing
\begin{gather*}
e^{ -\frac{\lambda}{2} \int d^dx \phi^4(x) }
 = \int d\nu (\sigma) e^{ i \sqrt \lambda \int d^dx \phi^2(x) \sigma (x) },
\end{gather*}
where $d\nu$ is the normalized Gaussian measure with ``ultralocal'' covariance kernel $\delta(x-y)$. The functional integral over $\phi$ becomes Gaussian, hence can be explicitly performed, leading to
\begin{gather*}
Z(\lambda, J) = \int d\nu (\sigma ) \, e^{\langle J,C^{1/2} R(\sigma) C^{1/2} J\rangle } e^{ - \frac{1}{2}\Tr \log (1 - i \sqrt \lambda C^{1/2} \sigma C^{1/2}) },
\end{gather*}
where we used operator notation. $\sigma$ is a local multiplication operator, diagonal in direct space, $C^{1/2}$ is the square root in operator sense of the positive operator~$C$, hence is not diagonal in direct space\footnote{Cyclicity of the trace means we can also freely replace the operator $C^{1/2} \sigma C^{1/2}$, by $C \sigma$ or $\sigma C$, leading to slightly shorter formulas, but this would be unwise as Hermiticity would no longer be visible.}.
1 means the identity operator (with kernel $\delta(x-y)$ in direct space), and $\Tr K = \int d^dx K(x,x)$ simply means integration of the diagonal part of a~kernel~$K$. Finally the resolvent $R (\sigma)$ is given by
\begin{gather*} R (\sigma) = \big[ 1 - i \sqrt \lambda C^{1/2} \sigma C^{1/2}\big]^{-1},
\end{gather*}
and for any $\sigma$ and any $\lambda$ real positive it is a well def\/ined operator norm-bounded by~1. More precisely a moment of investigation reveals that
the useful natural domain of def\/inition and analyticity in the~$\lambda$ complex plane for such a resolvent is a cardioid, pictured in Fig.~\ref{cardioidfig}.

\begin{figure}[t]\centering
\includegraphics[width=5cm]{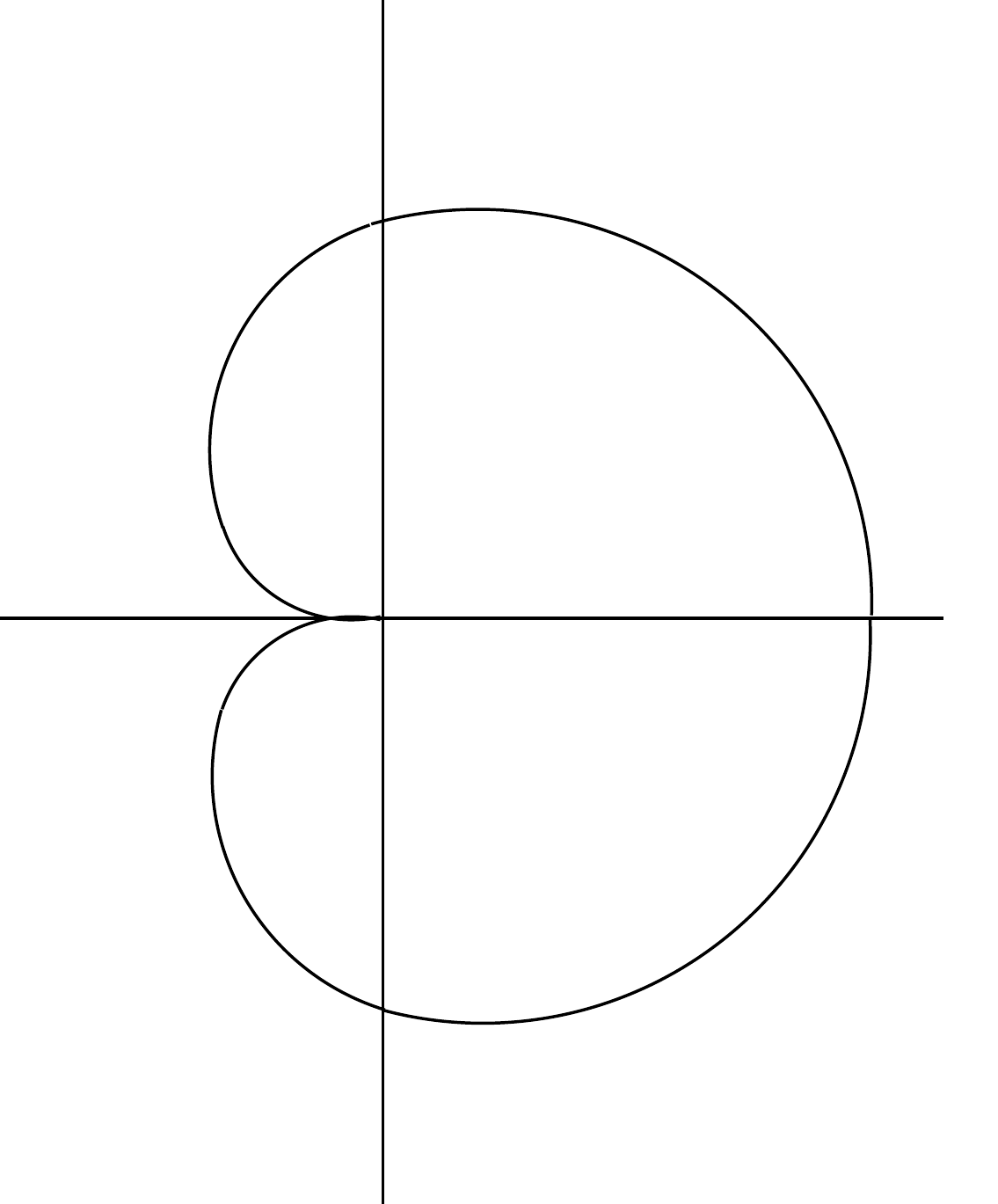}
\caption{The cardioid domain.}\label{cardioidfig}
\end{figure}

This cardioid is def\/ined, in polar coordinates $\lambda = \rho e^{i \phi}$, $\rho >0$, $\phi \in ]{-}\pi , \pi[ $ by the inequality $\rho < [ \cos (\phi/2 )]^{2} $. Indeed under this condition we have that $R (\sigma)$ is a well def\/ined operator norm-bounded by $[ \cos (\phi/2 )]^{-1}$. In expansion steps of the resolvents such as
\begin{gather*} R (\sigma) = 1 + i \sqrt \lambda C^{1/2} \sigma C^{1/2} R (\sigma)
\end{gather*}
the bad factor $[ \cos (\phi/2 )]^{-1}$ for $\Vert R\Vert$ will be always compensated by a good factor $\sqrt \rho < \cos (\phi/2 )$ coming from the $\sqrt \lambda$ numerator. Remark that any cardioid domain \emph{contains} a Nevanlinna disk of the type shown in Fig.~\ref{fig:borel}.

Just like the ordinary representation, the IFR has an associated perturbation theory, obtained by expanding the exponential of the (non-polynomial) interaction $V=\frac{1}{2}\Tr \log \big( 1 - i \sqrt \lambda C^{1/2} \sigma C^{1/2} \big)$ as
\begin{gather*} e^{-V} =\sum_{n=0}^\infty \frac{(-V)^n}{n!},
\end{gather*}
then commuting (again illegally!) integration with respect to $d\nu$ and Gaussian integration over~$\sigma$, leading to
\begin{gather*} Z(\lambda) = 1 + \sum_{n=1}^\infty \frac{(-1)^n}{n!} \Big[ e^{ \frac{1}{2} \int d^dx \frac{\delta^2}{\delta \sigma (x)^2}} V(\sigma)^n \Big]_{\sigma =0} .
\end{gather*}

An intermediated f\/ield propagator is represented in Fig.~\ref{figinterfey0} below by a dashed line, which corresponds to the (ultra-local) covariance of a $\sigma$ intermediate f\/ield. The interaction $V=\frac{1}{2}\Tr \log \big(1 - i \sqrt \lambda C^{1/2} \sigma C^{1/2}\big) $ is non polynomial, but since $\Tr \log 1 = 0$, it appears in the perturbative expansion only through its functional derivatives which we call \emph{loop vertices}. In contrast with the original~$\phi^4$ vertices, such loop vertices have arbitrary coordination $q\ge 1$, and bear the initial propagators of the theory on their~$q$ \emph{corners} or arcs.

Several interesting remarks are in order. The IFR exchanges the traditional role of propagators and vertices. The IFR loop vertices are non-local, as they contain the propagators of the ordinary theory. The IFR propagators in contrast are local as they correspond to the vertices of the ordinary theory. An even deeper remark is that the cyclic character of the trace allows for an embedding of the graphs in the plane, so that the Feynman graphs of the IFR should really be considered as \emph{combinatorial maps}.

\begin{figure}[t]\centering
\includegraphics[scale=0.4]{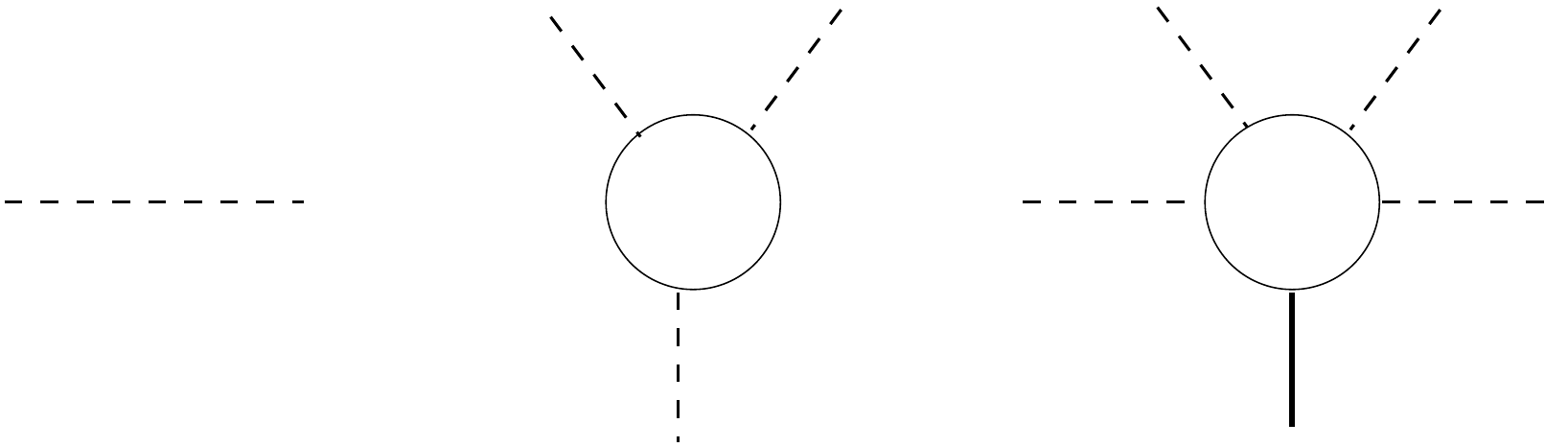}
\caption{An intermediate f\/ield propagator (dashed line), a loop vertex (circle, here with~3 corners or arcs and~3 intermediate f\/ield half-propagators) and a ciliated loop vertexwith 4 intermediate f\/ield half-propagators and f\/ive arcs/corners.}\label{figinterfey0}
\end{figure}

Finally it is easy to add sources to the picture. Any connected function of order $2p$ is obtained by the same expansion in which we add to the ordinary loop vertices $p$ particular \emph{ciliated loop vertices}~\cite{Gurau:2013pca}. The cilium of the $j$-th ciliated vertex means a~cut with two dif\/ferent external variables $x_j$, $x_{j+1}$ for $j=1, \dots , p$ on both sides of the cilium. Apart from the cut, ciliated vertices are similar to loop vertices and bear propagators on each of their corners.

\begin{figure}[t]\centering
\includegraphics[scale=0.4]{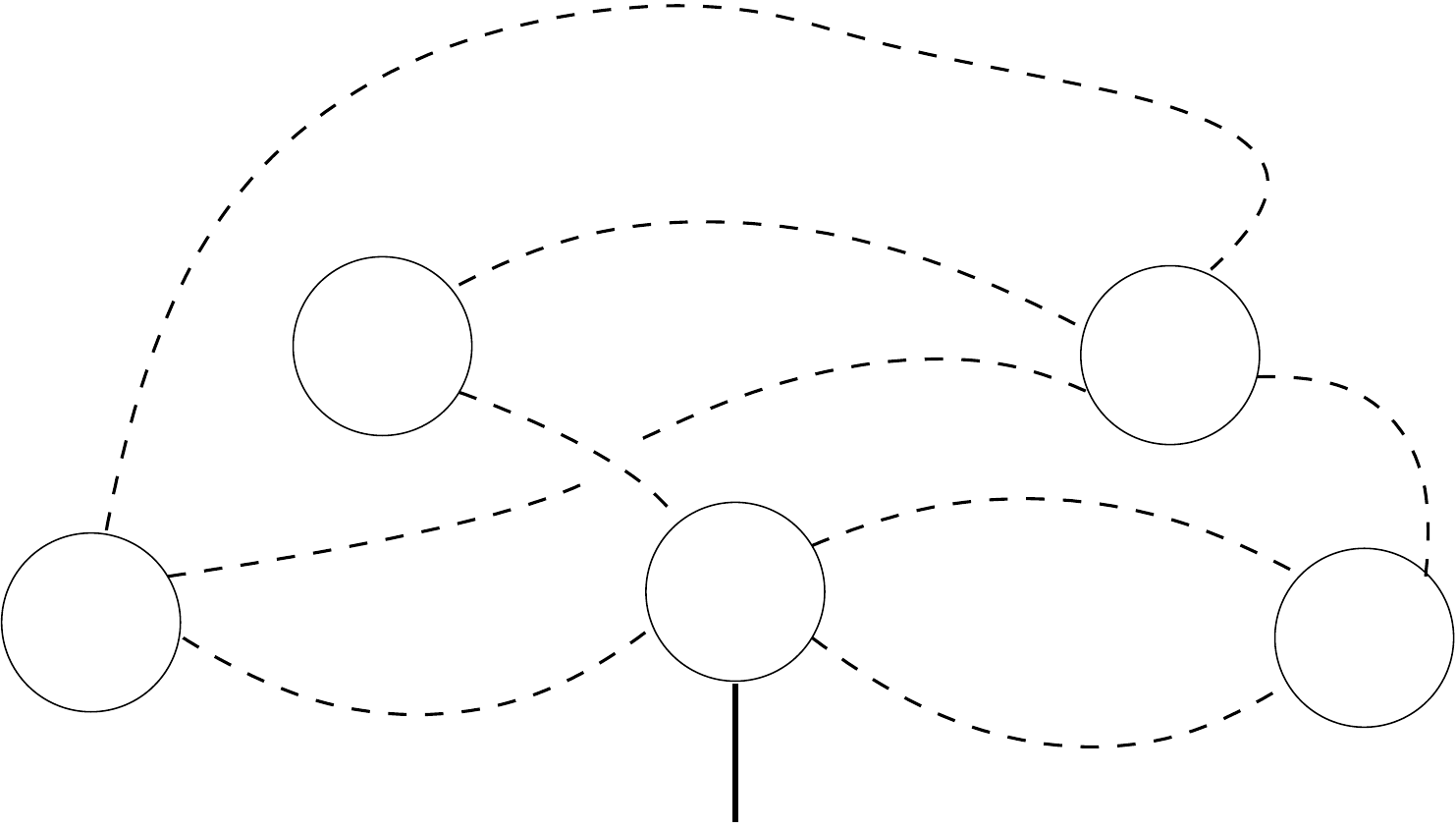}
\caption{A 2-point graph in the intermediate f\/ield theory, here of order 8, has a single \emph{ciliated} loop vertex.}\label{figinterfey2}
\end{figure}

The $2p$-point functions of the theory become sums over Feynman graphs of the intermediate representation with exactly~$p$ ciliated vertices, as shown in Fig.~\ref{figinterfey2}. The order of perturbation is simply the number of dashed lines.

\subsection{The forest formula}\label{forformul}

A forest formula expands a quantity def\/ined on $n$ points in terms of forests built on these points. Forest formulas, provided they have a positivity property are the key combinatorial component of constructive f\/ield theory and were developed systematically in particular by Brydges, Battle and Federbush. The most beautiful such formula is symmetric under action of the permutation group on the~$n$ points. It was discovered in~\cite{BK} and developed with alternative proofs in~\cite{Abdesselam:1994ap}\footnote{Non-symmetric versions appeared earlier in the constructive literature, but won't be treated here (see~\cite{Rivasseau:2013tpa} for a recent reference).}.

Consider $n$ points which we identify with the set $V_n$ of vertices of the complete graph $K_n$; the set of pairs of such points has $n(n-1)/2$ elements $\ell = (i,j)$ for $1\le i < j \le n$ and can be identif\/ied with the set $E_n$ of edges $\ell$ of $K_n$. The forest formula is often presented as a Taylor expansion for functions~$f$ of $n(n-1)/2$ variables $x_\ell$, $\ell \in E_n$ which are smooth, e.g., on an open neighborhood of $[0,1]^{n(n-1)/2}$. Here we expose a variant borrowed from \cite{Gurau:2014vwa} close to constructive applications.

Consider the vector space $S_n$ of symmetric $n$ by $n$ matrices $X= \{X_{ij}\}$, $i,j = 1 ,\dots, n$. It has dimension $n(n+1)/2$. The set ${\rm PS}_n$ of positive symmetric matrices whose diagonal coef\/f\/icients are all equal to 1 and of\/f-diagonal elements are between 0 and 1 is compact and convex. Symmetric matrices with diagonal elements equal to one and of\/f-diagonal elements in $[0,1]^{n(n-1)/2}$ do not all belong to ${\rm PS}_n$, for instance the matrix $\begin{pmatrix} 1 &1&0 \cr 1 &1&1 \cr 0 &1&1 \end{pmatrix}$ is not positive. Any matrix $X\in {\rm PS}_n$ can be parametrized by $n(n-1)/2$ elements~$X_\ell$, where $\ell$ runs over the edges of the complete graph~$K_n$.

${\rm PS}_n$ contains as particularly interesting elements the block matrices $X^\Pi$ for any partition $\Pi$ of $V_n$. The block matrix $X^\Pi$ has entries $X_{ij}^\Pi =1 $ if $i$ and $j$ belong to the same block of the partition~$\Pi$, and~0 otherwise. Two extremal cases are the identity matrix $\cId$, corresponding to~$X^{\rm sing}$, that is to the maximal partition made of all singletons, and the matrix $\bbone$ with all entries equal to one, corresponding to~$X^{V_n}$, that is to the minimal partition made of a single block.

Let us consider a function $f$ def\/ined and smooth in the interior of ${\rm PS}_n$ with continuous extensions (together with all their derivatives) to ${\rm PS}_n$ itself. The forest formula can be expressed as a multi-variate Taylor formula with integral remainder which expands such a~function between the minimal and maximal block-partition matrices~$\bbone$ and~$\cId$. The important point is that the Taylor remainder integrands stay on the ${\rm PS}_n$ convex set.
The precise statement is

\begin{thm}[the forest formula]\label{theoforest}
\begin{gather*} f( {\bbone }) = \sum_{\cF} \int dw_\cF \, \partial_\cF f   \big[ X^\cF (w_\cF) \big], \label{bkar}
\end{gather*}
where
\begin{itemize}\itemsep=0pt
\item the sum over $\cF$ is over forests over $n$ labeled vertices $i = 1, \dots , n$, including the empty forest with no edge. Such forests
are exactly the acyclic edge-subgraphs of the complete graph $K_n$,

\item $\int dw_\cF$ means integration from~$0$ to $1$ over one parameter for each forest edge: $\int dw_\cF \equiv \prod\limits_{\ell\in \cF} \int_0^1 dw_\ell $. There is no integration for the empty forest since by convention an empty product is~$1$. A generic integration point $w_\cF$ is therefore made of $\vert \cF \vert$ parameters $w_\ell \in [0,1]$, one for each $\ell \in \cF$,

\item $ \partial_\cF = \prod\limits_{\ell\in \cF} \partial_\ell $ means a product of first-order partial derivatives with respect to the variables~$X_{\ell}$ corresponding to the edges of~$\cF$. Again there is no such derivatives for the empty forest since by convention an empty product is~$1$,

\item $X^\cF (w_\cF)$ is defined by $X^\cF_{ii} (w_\cF )= 1$ $\forall\, i$, and for $i \ne j$, $X^\cF_{ij} (w_\cF)$ is the infimum of the $w_\ell$ parameters for $\ell$ in the unique path $P^\cF_{i \to j}$ from $i$ to $j$ in $\cF$, when such a path exists. If no such path exists, which means that $i$ and $j$ belong to different connected components with respect to the forest $\cF$, then by definition $X^\cF_{ij} (w_\cF) = 0$,

\item the symmetric $n$ by $n$ matrix $X^\cF (w_\cF)$ defined in this way is positive, hence belongs to ${\rm PS}_n$, for any value of $w_\cF$.
\end{itemize}
\end{thm}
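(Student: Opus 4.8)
The plan is to prove the formula by induction on the number of vertices $n$, using at each step a single first-order Taylor expansion with integral remainder that weakens the coupling between the connected component of a distinguished root vertex and the remaining vertices. The base case $n=1$ is trivial: the only forest is empty and both sides reduce to $f$. For the inductive step I would single out vertex $1$, let $B$ denote the (growing) connected component containing it, and interpolate the entries $X_{ij}$ with $i\in B$, $j\notin B$ by the \emph{minimum} rule $X_{ij}(t)=\min(t,X_{ij}^{\mathrm{curr}})$, where $X_{ij}^{\mathrm{curr}}$ is the value already produced by earlier steps and $t$ runs in $[0,1]$. Writing $f(\bbone)=f|_{t=0}+\int_0^1 dt\,\frac{d}{dt}f$, the boundary term $f|_{t=0}$ fully decouples $B$ from the outside. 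Since we only ever weaken couplings incident to $B$, the mutual couplings of the outside vertices are untouched, so $f|_{t=0}$ factorizes as the already-fixed tree on $B$ times a problem of the same form on the remaining vertices, to which the induction hypothesis applies; this contributes exactly the forests in which the current component of $1$ has been finalized.

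The remainder term is where the forest and the infimum structure are generated. Differentiating the minimum rule yields $\frac{d}{dt}f=\sum_{i\in B,\,j\notin B}\mathbf 1\big[t<X_{ij}^{\mathrm{curr}}\big]\,\partial_{ij}f$, so each summand creates exactly one new edge $(i,j)$ with weight $w_{ij}=t$, absorbs $j$ into $B$, and carries an indicator constraining $t$ against the weights already present. Iterating until every vertex has either joined $B$ or been decoupled, I obtain a sum over pairs (forest $\cF$, admissible order in which its edges were grown from the root) of integrals over the region of $[0,1]^{|\cF|}$ cut out by the accumulated indicators, with integrand $\partial_\cF f\big[X^\cF(w_\cF)\big]$ where $X^\cF_{ij}=\inf_{\ell\in P^\cF_{i\to j}}w_\ell$: the repeated minima encode precisely the infimum of the weights along the unique path, \emph{independently} of the order in which the edges were added. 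Only acyclic subgraphs arise, since a derivative never links two vertices already in the same component, so exactly forests are produced.

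I expect the main obstacle to be the bookkeeping that turns this into the clean statement. For a fixed forest $\cF$ one must verify that every forest is produced, and that the several growth orders compatible with rooting at $1$ yield integration regions which tile $[0,1]^{|\cF|}$ exactly once, so that summing them reassembles the unconstrained integral $\int dw_\cF$. The indicators coming from the minimum derivatives correspond precisely to the orderings of the edge weights compatible with the tree, and these orderings partition the cube; checking this combinatorial tiling, together with the fact that the integrand $\partial_\cF f[X^\cF]$ does not depend on the growth order, is the technical heart of the argument. Breaking the vertex symmetry of the formula by choosing a root is harmless, as the final expression is manifestly symmetric.

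It remains to prove the last bullet, that $X^\cF(w_\cF)\in\mathrm{PS}_n$. Here I would use the layer-cake (coarea) representation $X^\cF_{ij}=\int_0^1 \mathbf 1\big[i,j\text{ connected in }\cF_{\ge h}\big]\,dh$, where $\cF_{\ge h}=\{\ell\in\cF\colon w_\ell\ge h\}$ is the subforest of edges of weight at least $h$; this identity also covers $i=j$ (integrand $1$, giving $X_{ii}=1$) and the case where $i,j$ lie in distinct components (integrand $0$, giving $X_{ij}=0$). For each fixed $h$, ``being connected in $\cF_{\ge h}$'' is an equivalence relation whose matrix of indicators is exactly a block-partition matrix $X^{\Pi_h}$, which is positive semidefinite as a direct sum of rank-one all-ones blocks. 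Since $X^\cF(w_\cF)=\int_0^1 X^{\Pi_h}\,dh$ is an average of positive semidefinite matrices, it is itself positive, with unit diagonal and off-diagonal entries in $[0,1]$; hence it lies in $\mathrm{PS}_n$, completing the proof.
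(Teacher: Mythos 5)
Your proposal is correct in substance but follows a genuinely different route from the paper. The paper's proof is global and Kruskal-style: it Taylor-expands $f$ to first order in \emph{all} $n(n-1)/2$ off-diagonal variables at once, decomposes each resulting integral over Hepp sectors (complete orderings of the expanded edges), and regroups terms according to the leading Kruskal forest of each sector; resumming, for each cycle-closing edge, the two branches (edge expanded or not) and then summing over all sectors inducing a fixed order on $\cF$ reconstructs both the infimum rule and the full integration range. Your proof is local and Prim-style: you grow the component of a chosen root one edge at a time via the interpolation $X_{ij}(t)=\min\big(t,X_{ij}^{\mathrm{curr}}\big)$ and recurse on the decoupled complement. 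The combinatorial heart is the same in both, only dressed differently: the tiling fact you defer --- that for a fixed tree the regions cut out by the accumulated indicators, over all admissible growth orders, partition $[0,1]^{|\cF|}$ up to measure zero, while the integrand $\partial_\cF f\big[X^\cF(w_\cF)\big]$ is independent of the growth order --- is exactly what the paper discharges through its sum over Hepp sectors with a fixed leading forest. So your argument is not finished until that step is written out (for instance by showing that for generic weights exactly one growth order satisfies all the indicators), but the claim is true and verifying it is work of the same size and nature as the paper's sector bookkeeping.

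Two places where your route buys something. First, your positivity argument via the layer-cake identity $X^\cF=\int_0^1 X^{\Pi_h}\,dh$, with $\Pi_h$ the partition into connected components of $\cF_{\ge h}$, is ordering-free and complete as written; the paper proves the same convex decomposition into block matrices, but only after sorting the $w_\ell$ within a sector, so its decomposition depends on the ordering. Second --- and this you do not mention, although it matters --- your interpolation never leaves ${\rm PS}_n$: every intermediate matrix in your scheme is again a connectivity matrix of a weighted graph (equivalently it satisfies the ultrametric inequality $M_{ij}\ge \min(M_{ik},M_{kj})$, which the min-rule preserves), hence is positive by the very same layer-cake argument. This verification is not optional in your approach, since $f$ and its derivatives are only assumed defined on ${\rm PS}_n$ and you must evaluate them along your interpolation paths; but once done, it makes your proof valid under exactly the stated hypotheses. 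By contrast, the paper's blunt first expansion evaluates $\partial_S f$ at the matrices $Y^S(w_S)$, which need not be positive (the paper's own example of a non-positive matrix with unit diagonal arises this way from a two-edge path $S$ with weights equal to~1), so the paper's proof implicitly requires $f$ to be smooth on the whole cube of symmetric matrices with entries in $[0,1]$, not just on ${\rm PS}_n$. Your scheme, completed with the intermediate-positivity remark, avoids that enlargement of hypotheses.
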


Since $X^\varnothing = \cId$, the empty forest term in \eqref{bkar} is $f(\cId)$, hence \eqref{bkar} indeed
interpolates $f$ between $\bbone$ and $\cId$, staying on ${\rm PS}_n$ as announced.

\begin{proof} We would like to add a new proof to the seven proofs of \cite{Abd1001}, hence a new item
among the 1001 proofs which should exist according to that reference. We thank D.~Brydges for suggesting this proof.

We recall f\/irst Kruskal's greedy algorithm \cite{kruskal}. Consider a f\/ixed connected graph $G$, possibly with self-loops and multiple edges. For any \emph{Hepp sector} $\sigma$, hence any complete ordering of the edges of~$G$, this algorithm def\/ines a unique particular ``leading tree'' $T( \sigma)$,
which minimizes $\sum\limits_{\ell \in T} \sigma (\ell)$ over all trees of~$G$, where $\sigma(\ell)$ is the order of~$\ell$ in~$\sigma$. We call $T( \sigma)$, the \emph{leading tree} for $\sigma$. The algorithm simply picks the f\/irst edge $\ell_1$ in $\sigma$ (i.e., whose order $\sigma(\ell)$ is minimum) which is not a~self-loop. Then it picks the next edge $\ell_2$ in $\sigma$ that does not add a cycle to the (usually disconnected) graph with vertex set $V$ and edge set $\ell_1$ such that the order $\sigma(\ell_2)$ is minimal among edges with this property, and so on. Another way to explain the algorithm is through a deletion-contraction recursion: following the ordering of the sector~$\sigma$, every edge is either deleted if it is a self-loop or contracted if it is not. The set of contracted edges is exactly the leading tree~$T(\sigma)$.

Returning to the proof of the forest formula, we f\/irst bluntly develop the function $f$ at f\/irst order with integral remainder over \emph{all} its parameters, hence over all edges of the complete graph~$K_n$. The result is simply in notations compatible with Theorem~\ref{theoforest}
\begin{gather*} f( {\bbone }) = \sum_{S \subset K_n} \int dw_S \,\partial_S f \big( Y^S (w_S)\big ),
\end{gather*}
where $Y^S_{\ell} (w_S)$ is simply $w_\ell$ if $\ell \in S$ and 0 if $\ell \not\in S$.

Now for a given $S$ we decompose the $w_S$ integrals according to all Hepp sectors of $S$
\begin{gather*} f( {\bbone }) = \sum_{S \subset K_n} \sum_{\sigma} \int_{\sigma} dw_S \, \partial_S f \big( Y^S (w_S) \big),
\end{gather*}
and regroup all terms according to the leading forest $\cF(S, \sigma)$ which is made of the leading Kruskal tree in each connected component~$S_j$ of~$S$ for the order on $S_j$ induced by~$\sigma$. In this way we get
\begin{gather}\label{formul1} f( {\bbone }) = \sum_\cF \sum_{S \subset K_n, \sigma \atop \cF(S, \sigma) =\cF}
\int_{\sigma} dw_S\, \partial_S f \big( Y^S (w_S) \big)  .
\end{gather}

Now to analyze the result for a f\/ixed $\cF$ we remark that any edge $\ell \in K_n-\cF$ which does not add a cycle to $\cF$ cannot belong to any~$S$ in \eqref{formul1}, hence the value of $\partial_S f$ is always at 0 for such edges. In contrary, any edge $\ell \in K_n-\cF$ which adds a cycle to $\cF$ can either belong or not belong to~$S$ in~\eqref{formul1}. Regrouping the corresponding two terms, we obtain a value at~0 plus an integral of a derivative from~0 to the maximal value allowed by the condition $\cF(S, \sigma) =\cF$. But this maximal value is nothing but $X^\cF (w_\cF)$. In this way we have summed all sectors~$\sigma$ inducing a~f\/ixed Hepp sector $\tilde \sigma$ on $\cF$. Finally the remaining sum over $\tilde \sigma$ reconstructs exactly the integration range from~0 to~1 for all edges of~$\cF$.

The fact that $X^\cF (w_\cF)$ is positive for any ordering $\sigma$ now stems from the fact that, for any Hepp sector $\tilde \sigma$ of $\cF$ it is a convex barycentric combination of block matrices
\begin{gather*}
 X^\cF(w_\cF) = (1-w_{\ell_1}) \cId + (w_{\ell_1} -w_{\ell_2}) X^{\Pi_1} + (w_{\ell_2} -w_{\ell_3})X^{\Pi_2} + \dots + w_{\ell_k} X^{\Pi_k} .
\end{gather*}
Remark however that this barycentric combination depends on $\tilde \sigma$. Hence $X^{\cF}$ is in ${\rm PS}_n$ for any~$w_\cF$, as announced, but for a dif\/ferent reason in each dif\/ferent sector (ordering) of the parameters~$w_\cF$.
\end{proof}

We give now a useful corollary of this theorem which expands Gaussian integrals over replicas. Consider indeed a Gaussian measure $d\mu_C$ of covariance $C_{pq}$ on a vector variable $\vec \tau$ with $N$ components $\tau_p$. To study approximate factorization properties of the integral of a product of $n$ functions of the variable $\vec \tau$ it is useful to f\/irst rewrite this integral using a replica trick. It means writing the integral over $n$ identical replicas $\vec \tau_i$ for $i=1, \dots , n$ with components $\tau_{p,i}$, with the perfectly well-def\/ined measure with covariance $[C\otimes \bbone]_{p,i ; q,j} = C_{pq} \bbone_{ij} = C_{pq} $:
\begin{gather*}
 I:= \int d\mu_C (\vec \tau) \prod_{i=1}^n f_i(\vec \tau) = \int d\mu_{C\otimes \bbone} (\vec \tau_i) \prod_{i=1}^n f_i(\vec \tau_i) .
\end{gather*}
Applying the forest formula to this Gaussian integral we obtain the following corollary
\begin{cor} \label{corolgaussfor}
\begin{gather*} I = \sum_{\cF} \int dw_\cF \int d\mu_{C\otimes X^\cF (w_\cF)} (\vec \tau_i)
 \; \partial^C_\cF \prod_{i=1}^n f_i(\vec \tau_i), 
\end{gather*}
where $\partial^C_\cF$ means $\prod\limits_{\ell =(i,j)\in \cF} \Bigl( \sum\limits_{p,q}\frac{\partial}{\partial \tau_{p,i}} C_{pq} \frac{\partial}{\partial \tau_{q,j}} \Bigr)$.
\end{cor}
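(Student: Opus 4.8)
The plan is to realize $I$ as the value at the all-ones matrix $\bbone$ of a single function defined on ${\rm PS}_n$, and then apply the forest formula of Theorem~\ref{theoforest}. Concretely, for $X\in {\rm PS}_n$ I would set
\begin{gather*}
f(X) := \int d\mu_{C\otimes X}(\vec \tau_i)\,\prod_{i=1}^n f_i(\vec \tau_i),
\end{gather*}
where $C\otimes X$ has entries $[C\otimes X]_{p,i;q,j}=C_{pq}X_{ij}$. Since $C$ is a covariance and $X\in{\rm PS}_n$ is positive, the tensor product $C\otimes X$ is positive, so each $f(X)$ is a genuine Gaussian integral; moreover $f$ depends smoothly on the entries $X_\ell$ and extends continuously to ${\rm PS}_n$. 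By construction $f(\bbone)=I$, because the covariance $C\otimes\bbone$ forces all replicas to coincide, exactly as in the displayed replica identity preceding the corollary. Theorem~\ref{theoforest} then yields $f(\bbone)=\sum_\cF\int dw_\cF\,\partial_\cF f\big[X^\cF(w_\cF)\big]$, and the positivity clause of that theorem guarantees that every interpolated matrix $C\otimes X^\cF(w_\cF)$ remains a legitimate (possibly degenerate) covariance along the whole integration.

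The heart of the computation is to identify the single edge derivative $\partial_\ell=\partial/\partial X_\ell$ acting on such a Gaussian integral. I would use the elementary covariance-derivative identity for Gaussian measures: if $A(s)$ is a smooth family of covariances, then
\begin{gather*}
\frac{d}{ds}\int F\,d\mu_{A(s)} = \frac12\sum_{\alpha,\beta}\frac{dA_{\alpha\beta}}{ds}\int \frac{\partial^2 F}{\partial\tau_\alpha\partial\tau_\beta}\,d\mu_{A(s)}.
\end{gather*}
Applying this with $A=C\otimes X$ and $s=X_\ell$ for an edge $\ell=(i,j)$, only the two off-diagonal blocks $(p,i;q,j)$ and $(p,j;q,i)$ are differentiated, each producing $C_{pq}$. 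Using the symmetry $C_{pq}=C_{qp}$ and relabeling $p\leftrightarrow q$, the two contributions are seen to be equal, so they cancel the factor $1/2$ and leave
\begin{gather*}
\partial_\ell f(X) = \int d\mu_{C\otimes X}(\vec \tau_i)\,\Bigl(\sum_{p,q}\frac{\partial}{\partial\tau_{p,i}}C_{pq}\frac{\partial}{\partial\tau_{q,j}}\Bigr)\prod_{k=1}^n f_k(\vec \tau_k),
\end{gather*}
which is precisely the insertion of the single-edge operator entering $\partial^C_\cF$.

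It then remains to iterate. Each further edge derivative $\partial_{\ell'}$ acts on the integrand obtained so far by the very same identity, bringing down another factor $\sum_{p,q}\frac{\partial}{\partial\tau_{p,i'}}C_{pq}\frac{\partial}{\partial\tau_{q,j'}}$. Since these are constant-coefficient differential operators in the commuting variables $\tau_{p,i}$, they commute, and their product over $\ell\in\cF$ is exactly the well-defined operator $\partial^C_\cF$. Hence $\partial_\cF f\big[X^\cF(w_\cF)\big] = \int d\mu_{C\otimes X^\cF(w_\cF)}(\vec \tau_i)\,\partial^C_\cF\prod_i f_i(\vec \tau_i)$, and substituting this into the forest formula gives the stated corollary.

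The step I expect to require the most care is the covariance-derivative identity at the boundary of ${\rm PS}_n$: the matrices $X^\cF(w_\cF)$ may have vanishing entries, so $C\otimes X^\cF$ can be degenerate and the associated Gaussian measure is supported on a proper subspace. I would first establish the identity for strictly positive covariances, where it follows from the representation $\int F\,d\mu_A=\big[\exp(\tfrac12\sum_{\alpha,\beta} A_{\alpha\beta}\frac{\partial}{\partial\tau_\alpha}\frac{\partial}{\partial\tau_\beta})F\big]_{\tau=0}$, and then pass to the boundary by continuity, which is precisely the continuous-extension hypothesis invoked in Theorem~\ref{theoforest}. Together with the bookkeeping of the combinatorial factor $2$ from the two symmetric off-diagonal blocks, this is where the genuine content lies, the remainder being a direct transcription of the forest formula.
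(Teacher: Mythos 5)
Your proposal is correct and follows essentially the same route as the paper: the paper's one-line proof invokes the representation $\int f\, d\mu_C = e^{\frac{1}{2}\sum_{ij} C_{ij}\partial_{\tau_i}\partial_{\tau_j}} f\big|_{\tau=0}$, which is exactly the identity from which your covariance-derivative formula (with its factor $\tfrac12$ cancelled by the two symmetric off-diagonal blocks) is derived, and then applies Theorem~\ref{theoforest} to the resulting function of the $X_{ij}$. Your treatment is simply a more careful expansion of that remark, with the boundary-degeneracy issue for $C\otimes X^\cF(w_\cF)$ handled by continuity as the paper implicitly assumes.
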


The proof follows directly from rewriting the Gaussian integral as
\begin{gather*}
\int d\mu_C f (x) = e^{ \frac{\partial}{\partial \tau_i} C_{ij} \frac{\partial}{\partial \tau_j}} f \Big{\vert}_{\tau=0}   .
\end{gather*}
This is the corolary used in many quantum f\/ield theory applications, such as cluster expansions or loop vertex expansions in which the forest formula is used to decouple terms which are coupled through propagators, i.e., through the covariance of a Gaussian measure.

\subsection{Jungle formulas}

The forest formula allows to test links for connecting nodes in a kind of minimal and symmetric way. However in many physical situations, and especially in situations with many scales involved, we would like to introduce a \emph{hierarchy} between the links connecting nodes, and we would like to test f\/irst connectedness through the most interesting links, those of the highest energy/shortest length.

Jungle formulas \cite{Abdesselam:1994ap} provide an abstract solution to this question. They are a simple genera\-li\-za\-tion of forest formulas.

\begin{Definition} Let $m\ge1$ be an integer. An {\sl m-jungle} over the set $V_n$ of $n$ labeled vertices is a sequence $\gF=(\cF_1,\ldots,\cF_m)$ of forests on the complete graph $K_n$ built on $V_n$ such that $\cF_1\subset\cdots\subset\cF_m$.
\end{Definition}

Keeping the same notations than in the previous subsection we now consider a function $f (X^{1}, \dots , X^m)$ def\/ined over $m$ copies of the previous space ${\rm PS}_n$, hence over $[{\rm PS}_n]^m$, which is smooth in the interior with continuous extension of all derivatives to the boundary. We want to Taylor expand with priority to the f\/irst matrix $X^1$, then to $X^2$ and so on. That is we shall test f\/irst how to link the vertices of $V_n$ into larger connected components f\/irst through a forest spanned by the links due to Taylor expansion of the of\/f-diagonal part of~$X_1$ then complete this forest (if it has still several connected components) by Taylor expanding in the of\/f diagonal part of the second of the matrix~$X_2$, but so that the new links together with the f\/irst ones still form a forest on $V_n$ and so on. The result is the $m$-level jungle formula of~\cite{Abdesselam:1994ap}.

\begin{thm}[the jungle formula]
\begin{gather} f( {\bbone })
= \sum_{\gF} \int dw_\gF \, \partial_\gF f  \big[ X^\gF (w_\cF) \big], \label{bkarjungle}
\end{gather}
where
\begin{itemize}\itemsep=0pt
\item the sum over $\gF$ is over $m$-jungles over the set $V_n$ of $n$ labeled vertices $V_n =\{ 1, \dots , n\}$, including the empty jungle $\gF=(\cF_1,\dots,\cF_m)$ with $\cF_i = \varnothing$ $\forall\, i$,

\item $\int dw_\gF \equiv \prod\limits_{\ell\in \cF_m} \int_0^1 dw_\ell $ means integration from~$0$ to~$1$ over one parameter for each edge in the final forest~$\cF_m$ of the jungle,

\item $ \partial_\gF = \prod\limits_{k=1}^m \prod\limits_{\ell\in \cF_k\backslash\cF_{k-1}} \partial_\ell $ means a product of first-order partial derivatives with respect to the variables $X^k_{\ell}$ corresponding to the edges of $ \cF_k\backslash\cF_{k-1}$ $($by convention $\cF_0 = \varnothing)$,

\item $X^\gF (w_\gF)$ is the sequence of $m$ matrices $X^{\gF,k} (w_\gF) $, $k=1, \dots, m$, defined by $X^{\gF,k}_{ii} (w_\gF )= 1$ $\forall\, i$, and for $i \ne j$, $X^{\gF,k}_{ij} (w_\cF)$ is
\begin{itemize}\itemsep=0pt

\item $1$ if $i$ and $j$ are connected by $\cF_{k-1}$,

\item $0$ if $i$ and $j$ are not connected by $\cF_{k}$,

\item the infimum of the $w_\ell$ parameters for $\ell$ in the intersection of $\cF_k\backslash\cF_{k-1}$ and the unique path of $\cF_k$ from~$i$ to~$j$,
when $i$ and $j$ are connected by $\cF_{k}$ but not by $\cF_{k-1}$,
\end{itemize}

\item the symmetric $n$ by $n$ matrices $X^{\gF,k} (w_\gF)$ defined in this way are all positive, hence $X^{\gF} (w_\gF)$ $\in [{\rm PS}_n]^m$, for any value of~$w_\gF$.
\end{itemize}
\end{thm}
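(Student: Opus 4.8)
The plan is to prove the jungle formula by induction on the number $m$ of levels, using the forest formula (Theorem~\ref{theoforest}) as the engine at each level. The base case $m=1$ is precisely Theorem~\ref{theoforest}, so only the inductive step requires work. The organizing idea is the hierarchy built into the interpolation: the level-$k$ matrix $X^{\gF,k}$ treats two vertices already joined by $\cF_{k-1}$ as fully coupled (entry $1$) and genuinely interpolates only the pairs lying in distinct $\cF_{k-1}$-components. I will therefore peel off one level at a time, at each step applying a forest formula to the current matrix but only over the edges that join distinct connected components of the forest already built.

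To run the induction I would first record the routine observation that the forest formula holds verbatim for an arbitrary graph $G$ in place of $K_n$: expanding a smooth function of the edge-variables $X_\ell$, $\ell\in E(G)$, produces a sum over the forests of $G$ with the same infimum interpolation, since the Kruskal argument in the proof of Theorem~\ref{theoforest} (Hepp sectors, leading Kruskal tree per connected component, barycentric decomposition into block matrices) never uses completeness of $K_n$. Granting this, suppose the formula holds up to $m-1$ levels, producing nested forests $\cF_1\subset\cdots\subset\cF_{m-1}$ together with their interpolated matrices. Let $G_{m-1}$ be the graph on $V_n$ whose edges join vertices lying in distinct connected components of $\cF_{m-1}$, a complete multipartite graph. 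I would then apply the general forest formula to the last matrix $X^m$, expanding only over the edges of $G_{m-1}$ while freezing the intra-component entries of $X^m$ at $1$. This produces a forest $\cF_m\setminus\cF_{m-1}\subset G_{m-1}$ whose edges join distinct $\cF_{m-1}$-components, so that $\cF_m:=\cF_{m-1}\cup(\cF_m\setminus\cF_{m-1})$ is again a forest containing $\cF_{m-1}$, and $(\cF_1,\dots,\cF_m)$ is a genuine $m$-jungle.

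The bookkeeping step is to check that the matrix produced at the last level is exactly the $X^{\gF,m}$ of the statement, which means matching its three cases. The frozen intra-component entries give the value $1$ whenever $i$ and $j$ are connected by $\cF_{m-1}$; the forest formula on $G_{m-1}$ gives the infimum of the $w_\ell$ over the new edges $\ell\in\cF_m\setminus\cF_{m-1}$ lying on the $\cF_m$-path from $i$ to $j$ whenever the two are joined only at level $m$; and it gives $0$ for pairs in distinct components of $\cF_m$, which are exactly the pairs in distinct components of the forest built inside $G_{m-1}$. Accumulating the one-parameter integrals and single derivatives over all $m$ levels reproduces $\int dw_\gF=\prod_{\ell\in\cF_m}\int_0^1 dw_\ell$ and $\partial_\gF=\prod_{k}\prod_{\ell\in\cF_k\setminus\cF_{k-1}}\partial_\ell$, since $\cF_m$ is the disjoint union of the $\cF_k\setminus\cF_{k-1}$. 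Positivity of each $X^{\gF,k}$ I would establish directly rather than through the induction: ordering the level-$k$ edges by decreasing $w$, one writes $X^{\gF,k}(w_\gF)$ as a convex combination of the block matrices $X^{\Pi}$ with $\Pi$ ranging over partitions coarser than the $\cF_{k-1}$-partition, each of which is positive semidefinite, being block-diagonal with rank-one all-ones blocks; hence $X^{\gF,k}(w_\gF)\in{\rm PS}_n$.

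The step I expect to be the main obstacle is the honest formulation of the per-level expansion over the multipartite graph $G_{m-1}$ rather than over the full $K_n$. One must expand $X^m$ only over edges joining distinct $\cF_{m-1}$-components, freezing the remaining entries at $1$, precisely so that the nesting $\cF_{m-1}\subset\cF_m$ and the forest property survive; and one must then check that the ``$1$ if connected by $\cF_{m-1}$'' branch of the interpolation is compatible with the barycentric positivity argument, whose base partition is now the $\cF_{m-1}$-partition rather than the singletons. A naive attempt to quotient $f$ to a function on the complete graph of $\cF_{m-1}$-blocks fails, because $f$ depends on the individual entries $X^m_{ij}$ and not merely on block-pair aggregates; retaining the individual edges of $G_{m-1}$, as above, is what makes the induction close.
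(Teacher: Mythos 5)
Your overall strategy (induction on $m$, applying a forest formula at each new level, with the positivity argument rebased on the $\cF_{k-1}$-partition) is a legitimate route to the ``easy generalization'' the paper alludes to, and two of your supporting observations are correct: the Kruskal/Hepp-sector proof of Theorem~\ref{theoforest} never uses completeness of $K_n$, and the jungle matrices $X^{\gF,k}$ are convex combinations of block matrices for partitions coarser than the $\cF_{k-1}$-partition. But the inductive step fails at its crucial claim: it is \emph{not} true that a forest of the simple multipartite graph $G_{m-1}$, unioned with $\cF_{m-1}$, is again a forest. Take $n=3$ and $\cF_{m-1}=\{(1,2)\}$, so that $G_{m-1}$ consists of the two edges $(1,3)$ and $(2,3)$; then $\cF'=\{(1,3),(2,3)\}$ is acyclic inside $G_{m-1}$ (it is a path), yet $\cF_{m-1}\cup\cF'$ is the triangle on $\{1,2,3\}$. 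Hence your per-level expansion contains terms which are not jungles, so the identity it produces is not \eqref{bkarjungle}; and these spurious terms do not generically vanish. Worse, in such a term the full interpolated matrix (intra-component entries frozen at $1$, the two new edges carrying independent parameters) is $\begin{pmatrix} 1 & 1 & w_{13} \cr 1 & 1 & w_{23} \cr w_{13} & w_{23} & 1 \end{pmatrix}$, whose determinant is $-(w_{13}-w_{23})^2<0$ whenever $w_{13}\neq w_{23}$; at $w_{13}=0$, $w_{23}=1$ it is precisely the matrix the paper exhibits as \emph{not} belonging to ${\rm PS}_n$. So almost everywhere in your integration domain you would be evaluating derivatives of $f$ at points outside $[{\rm PS}_n]^m$, where $f$ need not even be defined.

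The missing idea is that at level $k$ ``cycle'' must be judged relative to $\cF_{k-1}$, i.e., in the quotient \emph{multigraph} $K_n/\cF_{k-1}$ obtained by contracting each $\cF_{k-1}$-component to a single vertex while keeping every inter-component pair $(i,j)$ as a separate (multi-)edge carrying its own variable $X^k_{ij}$. An edge set $\cF'\subset G_{k-1}$ has acyclic image in this multigraph (in particular, no two edges of $\cF'$ joining the same pair of components) exactly when $\cF_{k-1}\cup\cF'$ is a forest on $V_n$, which is what the nesting $\cF_{k-1}\subset\cF_k$ requires. Note that your objection to quotienting only hits a strawman: one quotients the vertex set, not the variables, so $f$ still depends on each individual entry. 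This relative forest formula is within reach of the paper's own toolkit, since its Kruskal discussion explicitly allows ``self-loops and multiple edges'': rerun the Hepp-sector/leading-forest argument where an edge of $G_{k-1}$ is deleted precisely when it closes a cycle with $\cF_{k-1}$ together with the previously contracted edges. With that replacement, your matching of the three cases of $X^{\gF,k}$ and your positivity argument do go through and the induction closes. (Alternatively one can avoid induction altogether: expand all levels' variables simultaneously and order the Hepp sectors with priority to lower levels; the priority version of Kruskal's algorithm then selects exactly the $m$-jungles, which is presumably the one-pass ``easy generalization'' the paper has in mind.)
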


The proof is an easy generalization of the proof of the forest formula. Remark that the two-level jungle formula is a main ingredient in multiscale loop vertex expansions~\cite{Gurau:2013oqa}.

\section{Constructive tools}

\subsection{The loop vertex expansion}

The loop vertex expansion (LVE) combines an intermediate f\/ield functional integral representation for QFT quantities with the forest formula and a replica trick similar to the one of the previous section. It allows the computation of connected functional QFT integrals such as the free energy or connected Schwinger functions as convergent sums indexed by spanning trees of arbitrary size $n$ rather than divergent sums indexed by Feynman graphs.

Initially introduced to analyze \emph{matrix} models with quartic interactions~\cite{Rivasseau:2007fr}, the LVE has been extended to analyze random \emph{tensor} models \cite{Delepouve:2014bma,Delepouve:2014hfa,Gurau:2013pca, Magnen:2009at}. In this subsection and the next one we follow~\cite{Gurau:2014vwa}.

Canonical barycentric weights $w(G,T)$ can then be associated to any pair made of a connected graph $G$ and a spanning tree $T\subset G$ by considering the percentage of Hepp sectors in which the tree is leading for Kruskal's greedy algorithm:
\begin{gather*} w(G,T) =\frac{N(G,T)}{|E|!}, 
\end{gather*}
where $N(G,T)$ is the number of sectors $\sigma$ such that $T(\sigma)=T$.

Obviously these weights are barycentric, which simply means that
\begin{gather*} \sum_{ T \subset G} w(G, T) = 1 ,
\end{gather*}
where the sum runs over all spanning trees of $G$. Moreover we have, in the notations of Theorem~\ref{theoforest} the integral representation~\cite{Rivasseau:2013ova}.

\begin{Lemma}
\begin{gather*} w(G,T) = \int dw_T \prod_{\ell \in G-T} X^T_{i(\ell)j(\ell)} (w_T) . 
\end{gather*}
\end{Lemma}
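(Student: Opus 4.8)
The plan is to reinterpret the combinatorial ratio $w(G,T) = N(G,T)/|E|!$ probabilistically and then match the resulting integral term by term to the right-hand side. First I would assign to each edge $\ell$ of $G$ an independent random variable $u_\ell$ uniform on $[0,1]$. Almost surely all the $u_\ell$ are distinct, so they induce a total order on the edges, i.e., a Hepp sector $\sigma$, and by exchangeability each of the $|E|!$ sectors occurs with equal probability $1/|E|!$. Kruskal's greedy algorithm run in this order is exactly Kruskal's minimum-spanning-tree algorithm with weights $u_\ell$, so the leading tree $T(\sigma)$ coincides with the almost surely unique minimum spanning tree of $(G,u)$. Hence $w(G,T)$ equals the probability that $T$ is the minimum spanning tree for the weights $u$.

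Next I would invoke the cycle characterization of minimum spanning trees: for distinct weights, a spanning tree $T$ is \emph{the} minimum spanning tree if and only if every non-tree edge $\ell \in G-T$ is the strictly heaviest edge of the fundamental cycle it closes, i.e., $u_\ell > u_e$ for every tree edge $e$ lying on the unique path $P^T_{i(\ell)\to j(\ell)}$ in $T$. This turns the event that $T$ is the minimum spanning tree into the intersection, over non-tree edges, of the independent events $\{u_\ell > \max_{e\in P^T_{i(\ell)\to j(\ell)}} u_e\}$.

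Then I would integrate. Conditioning on the tree-edge weights $(u_e)_{e\in T}$, the non-tree weights appear independently, and integrating each one out gives $\int_0^1 du_\ell\,\mathbf{1}[u_\ell > \max_{e\in P} u_e] = 1 - \max_{e\in P^T_{i(\ell)\to j(\ell)}} u_e$, so that $w(G,T) = \int_{[0,1]^{|T|}} \prod_{e\in T} du_e \prod_{\ell\in G-T}\big(1 - \max_{e\in P^T_{i(\ell)\to j(\ell)}} u_e\big)$. Finally the change of variables $w_e = 1-u_e$ on each tree edge (Jacobian one, range preserved) sends $1 - \max_{e} u_e$ to $\min_{e} w_e$, which is precisely $X^T_{i(\ell)j(\ell)}(w_T)$ as defined in Theorem~\ref{theoforest}. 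This yields $w(G,T) = \int dw_T \prod_{\ell\in G-T} X^T_{i(\ell)j(\ell)}(w_T)$, as claimed.

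I expect the main obstacle to be setting up the dictionary of the first paragraph rigorously: checking that ties have measure zero so that the induced order is well defined and uniformly distributed, verifying that Kruskal's leading tree genuinely coincides with the minimum spanning tree of $(G,u)$, and invoking the cycle property in the sharp ``unique maximum on each fundamental cycle'' form. Once this correspondence is in place, the factorized integration over non-tree edges and the reflection $w_e = 1-u_e$ that converts the $\max$ into the $\inf$ appearing in $X^T$ are entirely routine.
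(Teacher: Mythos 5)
Your proof is correct and is essentially the paper's own argument read in reverse: the paper starts from the right-hand side, represents each factor $X^T_{i(\ell)j(\ell)}(w_T)=\inf_{\ell'\in P^T_{\ell}} w_{\ell'}$ as $\int_0^1 dw_\ell \prod_{\ell'\in P^T_{\ell}}\chi(w_\ell<w_{\ell'})$, and decomposes the resulting integral over all edge variables into Hepp sectors, where the indicator being $1$ in a sector is precisely your cycle-property characterization of the leading (minimum spanning) tree, so each sector contributes $1/|E|!$ exactly when $T(\sigma)=T$. The only difference is conventional: the paper orders each sector so that the first edge carries the largest $w$, which makes the Kruskal condition read $w_\ell<w_{\ell'}$ directly and dispenses with your reflection $w_e=1-u_e$; your i.i.d.-uniform phrasing and final change of variables reproduce the same computation.
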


\begin{proof} We introduce f\/irst parameters $w_\ell$ for all the edges in $G-T$, writing
\begin{gather*} X^T_{ij}(\{w \}) = \int_0^1 dw_\ell \bigg[ \prod_{\ell' \in P^T_{i \to j }} \chi(w_\ell < w_{\ell'} ) \bigg] ,
\end{gather*}
where $\chi(\cdots)$ is the characteristic function of the event $\cdots$. Then we decompose the $w$ integrals according to all possible orderings~$\sigma$:
\begin{gather*}
 \int_{0}^1 \prod_{\ell\in G}dw_\ell \prod_{\ell \not\in T} \bigg[ \prod_{\ell' \in P^T_\ell} \chi(w_\ell < w_{\ell'} ) \bigg]
=\sum_{\sigma} \chi ( T(\sigma) =T) \int_{0< w_{\sigma(|E|)} < \cdots < w_{\sigma(1)} < 1} \prod_{\ell\in G}dw_\ell .
\end{gather*}
Indeed in the domain of integration def\/ined by $0< w_{\sigma(|E|)} < \cdots < w_{\sigma(1)} < 1$ the function $\prod\limits_{\ell \not\in T} \bigl[ \prod\limits_{\ell' \in P^T_\ell} \chi(w_\ell < w_{\ell'} ) \bigr]$ is 1 or zero depending whether $ T(\sigma) =T$ or not, as this function being 1 is exactly the condition for Kruskal's algorithm to pick exactly~$T$. Strict inequalities are easier to use here: of course equal values of $w$ factors have zero measure anyway. Hence
\begin{gather*} \int dw_T \prod_{\ell \in G-T} X^T_{i(\ell)j(\ell)} (w_T) =\frac{N(G,T)}{|E|!} = w(G,T).\tag*{\qed} 
\end{gather*}
\renewcommand{\qed}{}
\end{proof}

The LVE expressed any Schwinger function $S$ as a convergent sum over trees of the \emph{intermediate field representation}:
\begin{gather*} S = \sum_{T} A_T, \qquad A_T = \sum_{G \supset T} w(G,T) A_G  ,
\end{gather*}
with
\begin{gather*} \sum_{T} \vert A_T\vert < +\infty .
\end{gather*}
The usual (divergent) perturbative expansion of $S$ is obtained by the ill def\/ined commutation of the sums over~$T$ and~$G$,
\begin{gather*} 
 S = \sum_{T} \bigg(\sum_{G \supset T} w(G,T) A_G \bigg)   \text{``$=$''} \sum_{G } \sum_{T \subset G} w(G,T) A_G = \sum_{G} A_G ,\\
\sum_{G} \vert A_G \vert = \infty  .
\end{gather*}

We shall limit ourselves here to introduce the LVE in the particularly simple case of the quartic $N$-vector models, for which the $1/N$ expansion is governed by rooted plane trees.

More precisely, consider a pair of conjugate vector f\/ields $\{\phi_p \}$, $\{\bar \phi_p \}$, $ p=1 ,\dots , N $, with $ (\bar \phi \cdot \phi )^2$ interaction. The corresponding functional integral
\begin{gather*}
 Z(z, N) = \int \frac{d \bar \phi d \phi }{(2i\pi )^N}  e^{- (\bar \phi \cdot \phi ) + \frac{z}{2N} (\bar \phi \cdot \phi )^2 }
\end{gather*}
is convergent for $\Re z<0$. Note the slightly unusual sign convention for the interaction term. We rewrite it, using a scalar intermediate f\/ield~$\sigma$, as
\begin{gather*} 
Z(z, N) = \int d \sigma \frac{e^{- \sigma^2 /2 } }{\sqrt{2\pi}} \int \frac{d \bar \phi d \phi }{(2i\pi )^N}
 e^{- (\bar \phi \cdot \phi ) + \sqrt {z/N} (\bar \phi \cdot \phi ) \sigma }
= \int \frac{d \sigma }{\sqrt{2\pi}} e^{- \sigma^2 /2 - N \log (1 - \sqrt { z/N} \sigma )} .
\end{gather*}
Def\/ining $\tau = \sigma / \sqrt N $ one gets
\begin{gather} Z(z, N) = \int \frac{\sqrt N d \tau }{\sqrt{2\pi}}e^{- N [\tau^2 /2 + \log (1 - \sqrt {z} \tau )]} . \label{partitfunc}
\end{gather}
The two point function
\begin{gather*} G_2(z, N) = \frac{1}{Z(z, N)} \int \frac{d \bar \phi d \phi }{(2i\pi )^N}  \frac{1}{N} \bigg(\sum_p \bar \phi_p \phi_p\bigg) e^{- (\bar \phi \cdot \phi ) + \frac{z}{2N} (\bar \phi \cdot \phi )^2 } ,
\end{gather*}
can be deduced from the free energy by a Schwinger--Dyson equation
\begin{gather*}
 0 = \frac{1}{Z(z, N)} \int \frac{d \bar \phi d \phi }{(2i\pi )^N}   \frac{1}{N}\sum_p \frac{\partial}{\partial \phi_p}
\big[ \phi_p e^{- (\bar \phi \cdot \phi ) +\frac{z}{2N} (\bar \phi \cdot \phi )^2 } \big] ,
\end{gather*}
which yields
\begin{gather}
G_2(z, N) = 1 + 2 z \frac{d}{d z } \left(\frac{1}{N} \log \int \frac{ \sqrt{N}d \tau }{\sqrt{2\pi}}e^{- N [\tau^2 /2 + \log (1 - \sqrt {z} \tau )]} \right) . \label{2pointint}
\end{gather}

A simple saddle point evaluates the integral \eqref{partitfunc} as $\frac{K e^{- N f(\tau_c) } }{ \sqrt{ f" (\tau_c)}}$, where the saddle point of $f(\tau) = \tau^2 /2 + \log (1 - \sqrt {z} \tau )]$ is at $\tau_c$ with $f'( \tau_c)=0$ hence $\tau_c = \frac{1}{{2 \sqrt z}} [1 - \sqrt{1 - 4 z}]$. Also
\begin{gather*}
\lim_{N \to \infty} \frac{ \log Z(z, N)}{N} = - f(\tau_c) ,
\end{gather*}
and the two point function in the $N\to \infty$ limit is{\samepage
\begin{gather} \lim_{N \to \infty} G_2(z, N)
= 1 + 2 z \left(- \partial_z f(\tau_c) -\partial_{\tau } f(\tau_c) \frac{d\tau_c}{dz} \right) =
 1 - 2z \frac{ -\frac{1}{2 \sqrt{z} } \tau_c }{ 1-\sqrt{z}\tau_c }\nonumber\\
\hphantom{\lim_{N \to \infty} G_2(z, N)}{}
= \frac{1}{2z} \big[ 1 - \sqrt{1 - 4 z} \big]  , \label{catalansigma}
\end{gather}
which we recognize as the generating function of the Catalan numbers.}

Let us now study Borel summability in $z$ of these quantities uniformly as $N \to \infty$, using the loop vertex expansion. We start from the intermediate f\/ield representation of the two-point function~\eqref{2pointint} and apply the LVE to get
\begin{gather} \label{vectorlve}
G_2(z, N) = \sum_{\cT} \frac{1}{n!} z^{n} \int dw_\cT \int d\mu_\cT \prod_{c \in C(\cT)} \frac{1}{1- \sqrt {z} \tau_{i(c)}},
\end{gather}
where in \eqref{vectorlve}
\begin{itemize}\itemsep=0pt

\item the sum over $\cT$ is over rooted plane trees, with one ciliated root vertex labeled $i=0$ plus $n \ge 0$ ordinary vertices labeled $1, \dots ,n$,

\item $\int dw_\cT$ as in Subsection~\ref{forformul} means $\big[ \prod\limits_{\ell\in \cT} \int_0^1 dw_\ell \big]$,

\item $d\mu_\cT $ is the normalized Gaussian measure on the $(n+1)$-dimensional vector f\/ield $\vec \tau = (\tau_i)$, $i=0,1, \dots, n$, running over the vertices of $\cT$, which has covariance $\frac{X_{ij}^\cT (w_\cT)}{N}$ between vertices $i$ and $j$. Recall that $X^\cT (w_\cT)$ is def\/ined in Subsection~\ref{forformul},

\item the product over $c$ runs over the set $C(\cT)$ of the $2n+1$ corners of the tree, the cilium creating an additional corner on the plane tree, and $i(c)$ is the index of the vertex to which the corner $c$ belongs.
\end{itemize}

It is now obvious why \eqref{catalansigma} is true; since the covariance of the $\tau$ f\/ields vanishes as $N \to \infty$ the limit of $G_2(z, N) $ is obtained by putting every $\tau_{i(c)}$ factor to 0 in every corner resolvent, in which case we exactly get a weight $z^{n(\cT)}$ for each rooted plane tree, hence we recover the Catalan generating function (the $1/n!$ is canceled by relabeling of the vertices).

We can now use \eqref{vectorlve} to prove analyticity and Borel summability of the free energy and correlation functions of the model in the variables~$z$ and $1/N$ in the cardioid domain of Fig.~\ref{cardioidfig} (see~\cite{Frohlich:1982rj} for an early reference to Borel summability of the $1/N$ expansion of vector models).

Let us set $z = \vert z \vert e^{i \pi + i\phi}$ for $\vert \phi \vert <\pi $. We have $\sqrt z = i\sqrt{ \vert z \vert} e^{i \phi/2}$. Each resolvent $\frac{1}{1- \sqrt {z} \tau_{i(c)}}$ is bounded in norm by $[\cos (\phi/2)]^{-1}$, hence using the fact that there are $2n+1$ such resolvents, we obtain analyticity of representation \eqref{vectorlve} for $4 \vert z \vert < [ \cos (\phi/2 )]^{2} $, the cardioid domain of Fig.~\ref{cardioidfig}.

But in fact we can extend the analyticity domain into the extended cardioid domain of Fig.~\ref{extendedcardioidfig}, a~domain introduced for quartic vector models in~\cite{Billionnet:1982cd}. Indeed using the parametric representation of resolvents
\begin{gather*} 
\frac{1}{1- \sqrt {z} \tau} = \int_0^\infty d \alpha_c e^{- \alpha_c (1 - \sqrt {z} \tau ) }
\end{gather*}
we can explicitly integrate over the measure $d\mu_{\cT}$ and get the integral representation
\begin{gather*}
 G_2(z, N)  =  \sum_{\cT} \frac{1}{n!} z^{n} \int dw_\cT \left[ \prod_{c\in \cT} \int_0^\infty d \alpha_c e^{- \alpha_c } \right]
 e^{ \frac{z}{2N} \sum\limits_{ij}\big(\sum\limits_{c \in i} \alpha_c\big) X^\cT_{ij}(w_\cT ) \big(\sum\limits_{c \in j} \alpha_c\big) }  .
\end{gather*}
The formula above can be further simplif\/ied. Putting $\beta_i = \sum_{c \in i} \alpha_c $ we have
\begin{gather}
 G_2(z, N)  =  \sum_{\cT} \frac{1}{n!} z^{n} \left[ \prod_{i =0}^n \int_0^\infty d \beta_i \frac{\beta_i^{d_i-1}}{(d_i-1)!} e^{- \beta_i} \right]
 \int dw_\cT e^{ \frac{z}{2N} \sum\limits_{ij} \beta_i X^\cT_{ij}(w_\cT ) \beta_{j} }, \label{betarep}
\end{gather}
where $d_i$ is the degree of $i$, hence the number of corners of $i$.

\begin{figure}[t]\centering
\includegraphics[width=7cm]{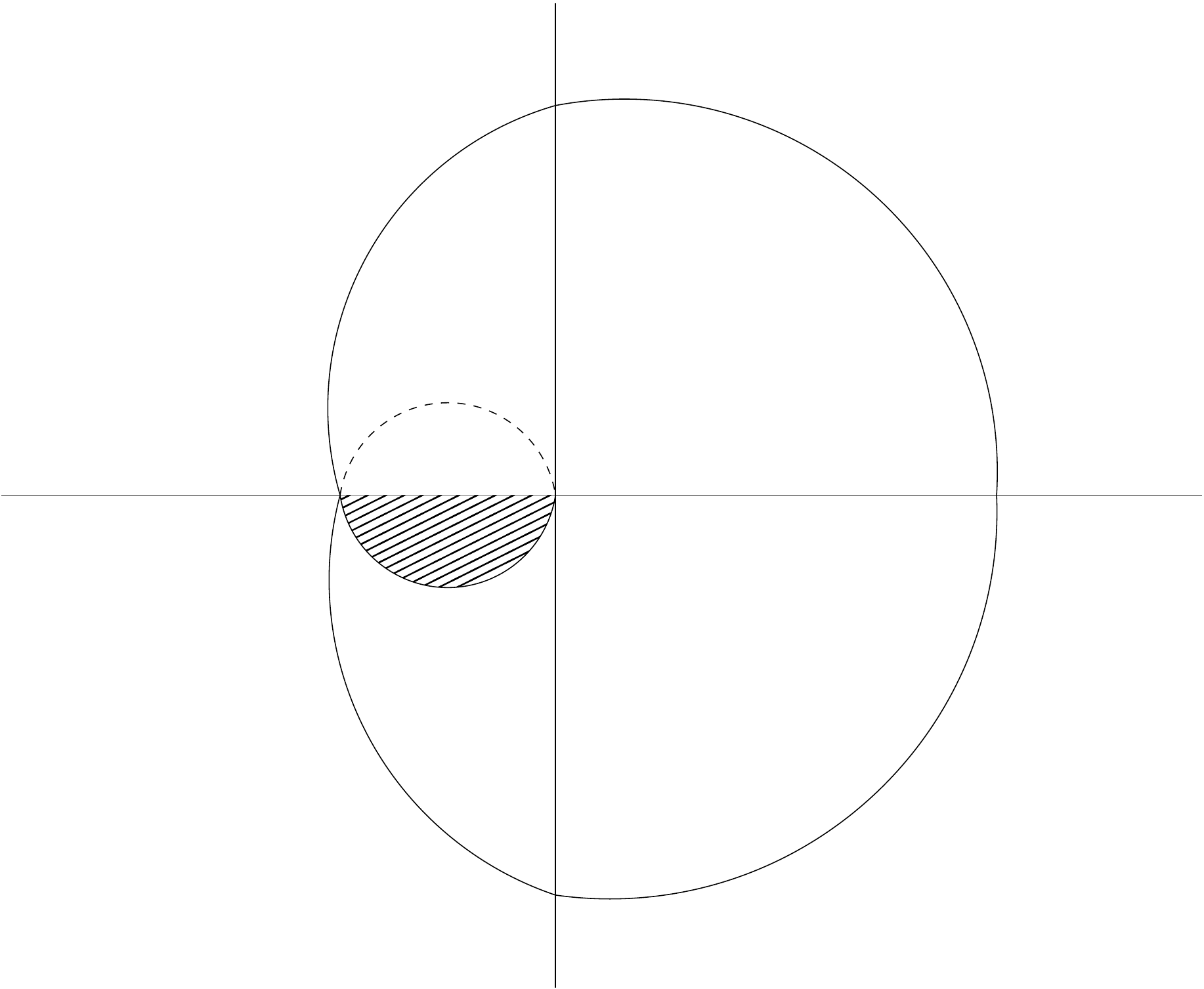}
\caption{The extended cardioid domain of vector models. The crosshatching represents the analytic continuation above the negative axis cut; there is a~symmetric analytic continuation below the cut.}\label{extendedcardioidfig}
\end{figure}

Setting $z = \vert z \vert e^{i \pi + i\phi}$ and $\beta = \vert \beta \vert e^{i \psi}$ we have, for $ - \pi/2 \le \phi + 2 \psi \le \pi/2$ and $- \pi/2 < \psi < \pi /2$, both $\cos ( \phi + 2 \psi ) \ge 0 $
and $\cos \psi > 0 $, hence
\begin{gather*}
\left\vert \left[ \prod_{i =0}^n \int_0^{e^{i\psi}\infty} d \beta_i \frac{\beta_i^{d_i-1}}{(d_i-1)!} e^{- \beta_i} \right]
e^{ \frac{z}{2N} \sum\limits_{ij} \beta_i X^\cT_{ij}(w_\cT ) \beta_j } \right\vert \\
\qquad{} \le \prod_{i =0}^n \int_0^\infty d \vert \beta_i \vert \frac{\vert \beta_i\vert ^{d_i-1}}{(d_i-1)!} e^{-\vert \beta_i \vert \cos \psi}
\le ( \cos \psi )^{-\sum\limits_{i =0}^n d_i } = ( \cos \psi )^{-2n-1} .
\end{gather*}
Therefore $G_2(z, N)$ is analytic in the extended cardioid $C = C_+ \cup C_-$, where $C_+$ is the union of the quarter-disk $0 \le \phi < \pi /2$, $ 4\vert z \vert < 1 $ and of the domain $\pi/2 \le \phi < 3\pi /2$, $ 4\vert z \vert < [ \cos (\phi/2 -\pi/4 )]^{2} $. $C_-$ is the complex conjugate domain.

{\allowdisplaybreaks
To prove that this convergent analytic function is the Borel sum of its perturbative series at any f\/ixed $N$ also requires uniform Taylor estimates
of the type $K^p p! \vert z\vert^{p}$ for the Taylor remainder at order $p$ in at least a disk tangent to the imaginary axis (Nevanlinna's criterion).
They follow from Taylor expanding the exponential of the $\beta$ quadratic form with an integral remainder:
\begin{gather*}  e^{ \frac{z}{2N} \sum\limits_{ij} \beta_i X^\cT_{ij}(w_\cT ) \beta_j } =
\sum_{q=0}^{p-1} \frac{z^q}{q!(2N)^q} \bigg[ \sum_{ij} \beta_i x^\cT_{ij}(w_\cT ) \beta_j \bigg]^q \\
\hphantom{e^{ \frac{z}{2N} \sum\limits_{ij} \beta_i X^\cT_{ij}(w_\cT ) \beta_j } =}{}  + \int_0^1 dt \frac{ (1-t)^{p-1}}{(p-1)!}
\frac{z^{p}}{(2N)^{p} }\bigg[ \sum_{ij} \beta_i X^\cT_{ij}(w_\cT ) \beta_j \bigg]^{p} e^{ t \frac{z}{2N} \sum\limits_{ij} \beta_i X^\cT_{ij}(w_\cT ) \beta_j } .
\end{gather*}
The sum over $q$, i.e., the $p$ f\/irst terms, are exactly the perturbative expansion up to order~$p$ hence support a $K^pp!|z|^p$ bound. The Taylor remainder term for any tree $\cT$ in the disk $ - \pi/2 \le \phi \le \pi /2$, where we can take $\psi =0$ can be bounded as
\begin{gather*}
\left \vert  \left[ \prod_{i =0}^n \int_0^{\infty} \frac{\beta_i^{d_i-1}}{(d_i-1)!} d \beta_i e^{- \beta_i} \right] \int_0^1 dt \frac{ (1-t)^p}{p!}
\frac{z^{p}}{(2N)^{p} } \right. \\
\left.\qquad\quad{}\times
\int dw_\cT \bigg[ \sum_{ij} \beta_i X^\cT_{ij}(w_\cT ) \beta_j \bigg]^{p} e^{ t \frac{z}{2N} \sum\limits_{ij} \beta_i X^\cT_{ij}(w_\cT ) \beta_j } \right\vert \\
\qquad{} \le \frac{\vert z \vert ^{p}}{p!}\left[ \prod_{i =0}^n \int_0^{\infty} \frac{\beta_i^{d_i-1}}{(d_i-1)!} d \beta_i e^{- \beta_i} \right]
\left[\sum_{i=0}^n \beta_i \right]^{2p}  \\
\qquad{}  = \frac{\vert z \vert ^{p}}{p!} \left( \prod_{i=0}^{n} \frac{1}{(d_i-1)!} \right) \int_0^{\infty} d\beta e^{-\beta} \beta^{2p }
\int_{\beta_1+\dots +\beta_n=\beta } \prod_{i=0}^{n} \beta_i^{d_i-1} d\beta_i\\
\qquad{} = \frac{\vert z \vert ^{p}}{p!} \left( \prod_{i=0}^{n} \frac{1}{(d_i-1)!} \right) (2p+2n+1)!
\int_{u_1+\dots +u_n=1 } \prod_{i=0}^{n} u_i^{d_i-1} du_i \le 4^n K^p p!|z|^p  .
\end{gather*}
These Taylor estimates for a single rooted plane tree can be summed over all rooted plane trees (using the $\vert z\vert ^n$ factor in~\eqref{betarep})
in the half-disk def\/ined by $16 \vert z \vert <1$ and $ - \pi/2 \le \arg z \le \pi /2$ (shown in red on Fig.~\ref{extendedcardioidfig}). Hence in this half-disk (which is uniform in $N$) we obtain the desired Taylor estimates, which is more than enough to check that the expansions \eqref{vectorlve} and~\eqref{betarep} represent indeed for all $N$ the unique Borel sum of the perturbative series.

}

Interesting functions are the real and imaginary parts along the real axis $0\le z < 1/8$ which are
\begin{gather*} G^{\rm mean}_{2} (z, N)= \frac{ G_2(z, N)_+ + G_2(z, N)_-}{2}, \qquad G^{\rm cut}_{2} (z, N)= \frac{ G_2(z, N)_+ - G_2(z, N)_-}{2i},
\end{gather*}
where $G_+$ is analytically continued to $\phi = +\pi$ and $G_-$ is analytically continued to $\phi = -\pi$. Taking $\psi = -\pi /4$ in the f\/irst case and $\psi = + \pi/4$ is the second case, one obtains explicitly convergent integral representations for these quantities, namely
\begin{gather*} 
G^{\rm mean}_{2} (z, N) = \sum_{\cT} \frac{1}{n!} z^{n}
\left[ \prod_{i \in V(\cT)} \int_0^\infty \frac{\beta_i^{d_i-1}}{(d_i-1)!} d \beta_i e^{- \frac{\sqrt 2}{2} \beta_i} \right] \\
\hphantom{G^{\rm mean}_{2} (z, N) =}{}
 \times \int dw_\cT \cos \left((2n+1) \frac{\pi}{4} + \frac{\sqrt 2}{2} \sum_i \beta_i + \frac{z}{2N} \sum_{ij} \beta_i X^\cT_{ij}(w_\cT ) \beta_{j} \right) ,\\ 
 G^{\rm cut}_{2} (z, N) = \sum_{\cT} \frac{1}{n!} z^{n} \int dw_\cT
\left[ \prod_{i \in V(\cT)} \int_0^\infty \frac{\beta_i^{d_i-1}}{(d_i-1)!} d \beta_i e^{- \frac{\sqrt 2}{2} \beta_i} \right] \\
\hphantom{G^{\rm cut}_{2} (z, N) = }{}
 \times \int dw_\cT \sin \left((2n+1) \frac{\pi}{4} + \frac{\sqrt 2}{2} \sum_i \beta_i + \frac{z}{2N} \sum_{ij} \beta_i X^\cT_{ij}(w_\cT ) \beta_{j} \right) . \end{gather*}
where the factors $(2n+1) \frac{\pi}{4}$ come from the rotation of the $\beta$ integrals, using $\sum d_i = 2n+1$. These convergent integrals extend half-way to the Catalan singularity $ z_{\text{Catalan}}=1/4 $. Indeed bounding the cosine or sinus function by~1 we obtain convergence, but loosing a factor $(\sqrt 2)^{\sum d_i} = 2^n \sqrt{2}$.

One can still check easily that the limit for $N \to \infty$ of the mean integral for positive $z$ is the Catalan function. Indeed the cosine function simplif\/ies in that case. Rotating the $\beta$ integrals back in position we obtain again the factor 1 for each rooted plane tree.

The extended cardioid is an analyticity domain in $z$ which holds for any $N \ge 1$. In other words it is common to all $N$-vector models, including the particular $N=1$ scalar case, However as $N \to \infty$ we could hope for larger and larger domains of analyticity which approach the $z = 1/4$ singularity when $N\to \infty $; but we do not know, even in this simple vector model case, how to prove this.

In the case of quartic large $N$ matrix \cite{Rivasseau:2007fr} and large $N$ tensor models \cite{Delepouve:2014bma,Delepouve:2014hfa,Gurau:2013pca, Magnen:2009at}, the LVE also provides analyticity in cardioid-like domains.

The constructive treatment of renormalizable models requires a~multiscale analysis, hence a~mul\-ti\-scale version of the loop vertex expansion (MLVE). Following \cite{Gurau:2013oqa}, we sketch now how this expansion works in the case of a super-renormalizable toy model which is a slight modif\/ication of the vector model above.

\subsection{Multiscale loop vertex expansion}

Consider the same pair of conjugate vector f\/ields $\{\phi_p \}$, $\{\bar \phi_p \}$, $p=1 ,\dots , N $, with the same \mbox{$\frac{\lambda^2 }{2}(\bar \phi \cdot \phi )^2$} bare interaction as in the previous section, but with a dif\/ferent Gaussian measure $d\mu (\bar \phi, \phi)$ which breaks the ${\rm U}(N)$ invariance of the theory. It has diagonal covariance (or propagator) which decreases as the inverse power of the f\/ield index:
\begin{gather*}
 d\eta(\bar \phi, \phi) = \left( \prod_{p=1}^N p \frac{ d\bar \phi_p d\phi_p }{2\pi \imath} \right) e^{-\sum_{p=1}^N p  \bar \phi_p \phi_p }  ,
\qquad \int d\eta (\bar \phi, \phi)  \bar\phi_p \phi_{q}= \frac{\delta_{pq}}{p}.
\end{gather*}
This propagator renders the perturbative amplitudes of the model f\/inite in the $N \to \infty$ limit, except for a mild divergence of self-loops which yields a logarithmically divergent sum $L_N = \sum\limits_{p=1}^N \frac{1}{p} \simeq \log N$. These divergences are easily renormalized by using a vector-Wick-ordered $\phi^4$ interaction, namely $\frac{1}{2}[\lambda (\bar \phi \cdot \phi -L_N)]^2$. Remark that this interaction (contrary to the $\phi^4_2$ case) remains positive for $\lambda$ real at all values of $(\bar\phi, \phi)$. The renormalized partition function of the model is
\begin{gather*}
Z(\lambda, N) = \int d\eta (\bar \phi, \phi ) \, e^{- \frac{\lambda^2}{2} (\bar \phi \cdot \phi -L_N)^2 }.
\end{gather*}
The intermediate f\/ield representation decomposes the quartic interaction using an intermediate scalar f\/ield $ \sigma $:
\begin{gather*}
e^{- \frac{\lambda^2}{2} (\bar \phi \cdot \phi -L_N)^2 } = \int d\nu (\sigma) \,
e^{ \imath \lambda \sigma (\bar \phi \cdot \phi -L_N) } ,
\end{gather*}
where $d\nu(\sigma) = \frac{1}{\sqrt{2\pi}} e^{-\frac{\sigma^2}{2} }$ is the standard Gaussian measure with covariance~1. Integrating over the initial f\/ields $(\bar \phi_p, \phi_p)$ leads to
\begin{gather*}
Z(\lambda, N) = \int d\nu (\sigma) \,  \prod_{p=1}^N \frac{1}{1- \imath \frac{\lambda \sigma}{p}} e^{-\imath \frac{\lambda \sigma}{p} }
 = \int d\nu (\sigma) \, e^{- \sum\limits_{p=1}^N \log_2 \bigl(1 - \imath \frac{\lambda \sigma }{p } \bigr) }  ,
\end{gather*}
where $\log_2 (1-x) \equiv x+ \log (1-x) = O(x^2)$.

Applying the ordinary LVE of the previous section to this functional integral would express $\log Z(\lambda, N)$ as a sum over trees, but there is no simple way to remove the logarithmic divergence of all leaves of the tree without generating many intermediate f\/ields in numerators which, when integrated through the Gaussian measure, would create an apparent divergence of the series. The MLVE is designed to solve this problem.

We f\/ix an integer $M>1$ and def\/ine the $j$-th \emph{slice}, as made of the indices $p \in I_j \equiv [M^{j-1},M^{j} -1]$. The ultraviolet cutof\/f $N$ is chosen as $N =M^{j_{\max}}-1$, with $j_{\max}$ an integer. We can also f\/ix an infrared cutof\/f $j_{\min}$. Hence there are $j_{\max}-j_{\min}$ slices in the theory, and the ultraviolet limit corresponds to the limit $j_{\max} \to \infty$. The intermediate f\/ield representation writes:
\begin{gather*}
Z(\lambda, N) = \int d\nu (\sigma)   \prod_{j =j_{\min}}^{j_{\max}} e^{- V_j} , \qquad
V_{j} =\sum_{p \in I_{j}} \log_2 \left(1 - \imath \frac{ \lambda \sigma}{p} \right)  .
\end{gather*}
The factorization of the interaction over the set of slices $\cS = [j_{\min}, \dots, j_{\max}]$ can be encoded into an integral over Grassmann numbers. Indeed,
\begin{gather*}
 a = \int d\bar \chi d\chi \, e^{-\bar \chi a \chi} = \int d\mu (\bar \chi ,\chi ) \, e^{- \bar \chi (a-1) \chi},
\end{gather*}
where $d \mu(\bar \chi ,\chi ) = d\bar \chi d\chi \, e^{-\bar \chi \chi}$ is the standard normalized Grassmann Gaussian measure with covariance~1. Hence,
denoting $W_j(\sigma) = e^{-V_{j}}-1$,
\begin{gather*}
Z(\lambda, N) = \int d\nu (\sigma)   \left( \prod_{j = j_{\min}}^{j_{\max} } d\mu (\bar \chi_j , \chi_j) \right)
 e^{ - \sum\limits_{j = j_{\min}}^{j_{\max}} \bar \chi_j W_j ( \sigma) \chi_j } .
\end{gather*}

We now rewrite the partition function as
\begin{gather*}
 Z(\lambda, N) = \int d\nu_\cS \, e^{- W},
\qquad\! d\nu_{\cS} = d\nu_{\bbone_\cS} ( \{ \sigma_j\} ) \, d\mu_{\mathbb{I}_\cS} (\{\bar \chi_j , \chi_j\}),\qquad\!
W = \sum_{j =j_{\min}}^{j_{\max}} \bar \chi_j W_j ( \sigma_j) \chi_j .
\end{gather*}
This is the starting point for the MLVE. The f\/irst step is to expand to inf\/inity the exponential of the interaction
\begin{gather*}
Z(\lambda, N) = \sum_{n=0}^\infty \frac{1}{n!}\int d\nu_{\cS} \, (-W)^n = \sum_{n=0}^\infty \frac{1}{n!} \int d\nu_{\cS,V} \, \prod_{a=1}^n (-W_a) ,
\end{gather*}
where the $a$-th vertex is
\begin{gather*}
W_a = \sum_{j =j_{\min}}^{j_{\max}} W_{a,j} , \qquad W_{a,j} = \bar \chi^a_j W_j ( \sigma^a_j ) \chi^a_j ,
\end{gather*}
and has now its own (replicated) bosonic and fermionic f\/ields $\sigma_j^a$, $\bar \chi^a_j$, $\chi^a_j$ and the replica measure is completely degenerate:
\begin{gather*}
d\nu_{\cS,V} = d\nu_{\bbone_\cS \otimes \bbone_V} (\{ \sigma^a_j\}) \, d\mu_{\mathbb{I}_\cS \otimes \bbone_V} (\{\bar \chi^a_j , \chi^a_j\}) .
\end{gather*}

The obstacle to factorize this integral over vertices lies now in the bosonic and fermionic degenerate blocks $\bbone_V$. In order to deal with these couplings we apply the $m=2$ jungle formula~\eqref{bkarjungle} with priority to the bosonic links. It means that in the measure $d \nu$ one introduces f\/irst the matrix $x_V$ with coupling parameters $x_{ab}=x_{ba}$, $x_{aa}=1$ between the vertex bosonic replicas
\begin{gather*}
 Z(\lambda, N) = \sum_{n=0}^\infty \frac{1}{n!}
 \int d\nu_{\bbone_\cS \otimes x_V} (\{ \sigma^a_j\}) \, d\mu_{\mathbb{I}_\cS \otimes \bbone_V} (\{\bar \chi^a_j , \chi^a_j\})
 \prod_{a=1}^n \left( - \sum_{j =j_{\min}}^{j_{\max}} W_{a,j} \right)
\Big\vert_{x_{ab}=1 }
\end{gather*}
and apply the forest formula. We denote $\cF_B$ a bosonic forest with $n$ vertices labelled $\{1,\dots, n\}$, $\ell_{B}$ a generic edge of the forest and $a(\ell_B), b(\ell_B)$ the end vertices of $\ell_B$. The result of the f\/irst forest formula is
\begin{gather*}
 Z(\lambda, N) = \sum_{n=0}^\infty \frac{1}{n!}
 \sum_{\cF_{B}} \int dw_{\cF_B} \int d\nu_{\bbone_\cS \otimes X(w_{\ell_B})}
 (\{ \sigma^a_j\}) \, d\mu_{\mathbb{I}_\cS \otimes \bbone_V} (\{\bar \chi^a_j , \chi^a_j\})\\
\hphantom{Z(\lambda, N) =}{} \times \partial_{\cF_B}  \prod_{a=1}^n \left( - \sum_{j =j_{\min}}^{j_{\max}} W_{a,j} \right),
\end{gather*}
where
\begin{gather*}
 \int dw_{\cF_B} = \prod_{\ell_B\in \cF_B } \int_{0}^1 dw_{\ell_B} , \qquad
\partial_{\cF_B}= \prod_{\ell_B \in \cF_{B}} \left( \sum_{j,k=j_{\min}}^{j_{\max}}
 \frac{\partial}{\partial \sigma^{a(\ell_B)}_j}\frac{\partial}{\partial \sigma^{b(\ell_B)}_k} \right),
 \end{gather*}
and $X_{ab}(w_{\ell_B})$ is the inf\/imum over the parameters $w_{\ell_B}$ in the unique path in the forest~$\cF_B$ connecting $a$ and $b$, and the inf\/imum is set to~$1$ if $a=b$ and to zero if $a$ and $b$ are not connected by the forest.

The forest $\cF_B$ partitions the set of vertices into blocks $\cB$ corresponding to its trees. Remark that the blocks can be singletons (corresponding to the trees with no edges in~$\cF_B$). We denote $a\in \cB$ if the vertex $a$ belongs to a~bosonic block~$\cB$. A~vertex belongs to a unique bosonic block. Contracting every bosonic block to an ``ef\/fective vertex'' we obtain a graph which we denote~$\{n\}/\cF_B$. We introduce decoupling parameters $y_{\cB\cB'}=y_{\cB'\cB}$ for the fermionic f\/ields $\chi^{\cB}_j$ corresponding to the blocks of $\cF_B$ (i.e., for the ef\/fective vertices of~$\{n\}/\cF_B$). Applying (a second time) the forest formula, this time for the $y$'s, is exactly the level~2 jungle formula as it leads to a set of fermionic edges $\cL_F$ forming a forest in $\{n\}/\cF_B$ (hence connecting bosonic blocks). We denote $L_{F} $ a generic fermionic edge connecting blocks and $\cB(L_F), \cB'(L_F) $ the end blocks of the fermionic edge~$L_F$. We obtain
\begin{gather}
Z(\lambda, N) = \sum_{n=0}^\infty \frac{1}{n!} \sum_{\cF_{B}} \sum_{\cL_F} \int dw_{\cF_B} \int dw_{\cL_F}
\int d\nu_{\bbone_\cS \otimes X(w_{\ell_B})} (\{ \sigma^a_j\}) \nonumber\\
\hphantom{Z(\lambda, N) =}{}  \times d\mu_{\mathbb{I}_\cS \otimes Y ( w_{L_F}) } (\{\bar \chi^\cB_j , \chi^\cB_j\})
\partial_{\cF_B} \partial_{\cL_F}
 \prod_{\cB} \prod_{a\in \cB} \left( - \sum_{j =j_{\min}}^{j_{\max }} \bar \chi^{\cB }_j W_j ( \sigma^a_j )
\chi^{\cB }_j \right), \label{eq:intermediate}
\end{gather}
where
\begin{gather*} \int dw_{\cL_F} = \prod_{L_F\in \cL_F } \int_0^1 dw_{L_F},\\
\partial_{\cL_F}= \prod_{L_F \in \cL_F}
\left(\sum_{j=j_{\min} }^{j_{\max}} \left( \frac{\partial}{\partial \bar \chi_j^{\cB(L_F)} } \frac{\partial}{\partial \chi_j^{\cB'(L_F)} }
+ \frac{\partial}{\partial \bar \chi_j^{\cB'(L_F)} } \frac{\partial}{\partial \chi_j^{\cB(L_F)}} \right)\right)  ,
\end{gather*}
and
$Y_{\cB\cB'}(w_{\ell_F}) $ is the inf\/imum over $w_{\ell_F}$ in the unique path in $\cL_{F}$ connecting $\cB$ and $\cB'$, this inf\/imum being set to $1$ if $\cB= \cB'$ and to zero if $\cB$ and $\cB'$ are not connected by $\cL_F$. Note that the fermionic edges are oriented. Expanding the sums over $j$ in the last line of equation~\eqref{eq:intermediate} we obtain a sum over slice assignments $J= \{j_a \}$ to the vertices~$a$, where $j_a \in [ j_{\min} , j_{\max}]$. Taking into account that $\partial_{\sigma^a_j} W(\sigma^{a}_{j_a}) = \delta_{jj_a} \partial_{\sigma^a_{j_a} } W(\sigma^{a}_{j_a}) $
we obtain:
\begin{gather*}
Z(\lambda, N) = \sum_{n=0}^\infty \frac{1}{n!} \sum_{\cF_{B}} \sum_{\cL_F} \sum_J  \int dw_{\cF_B} \int dw_{\cL_F}\\
\hphantom{Z(\lambda, N) =}{}\times
\int d\nu_{\bbone_\cS \otimes X(w_{\ell_B})} (\{ \sigma^a_j\})
 d\mu_{\mathbb{I}_\cS \otimes Y ( w_{L_F}) } \big(\big\{\bar \chi^\cB_j , \chi^\cB_j\big\}\big) \\
\hphantom{Z(\lambda, N) =}{}\times
 \partial_{\cF_B} \partial_{\cL_F} \prod_{\cB} \prod_{a\in \cB} \big( {-} \bar \chi^{\cB }_{j_a} W_{j_a} ( \sigma^a_{j_a} )
\chi^{\cB }_{j_a} \big) .
\end{gather*}
In order to compute the derivatives in $\partial_{\cL_F} $ with respect to the block fermionic f\/ields $\chi^{\cB}_j$ and $\bar \chi^{\cB}_j$ we note that such a derivative acts only on $ \prod\limits_{a\in \cB} \bigl( \chi^{\cB}_{j_a} \bar \chi^{ \cB }_{j_a} \bigr) $ and, furthermore,
\begin{gather*}
 \frac{\partial}{\partial \bar \chi_j^{\cB} }
 \prod_{a\in \cB} \bigl( \chi^{\cB}_{j_a} \bar \chi^{ \cB }_{j_a} \bigr)
= \left( \sum_{a'\in \cB} \delta_{jj_{a'}} \frac{\partial}{\partial \bar \chi_{j_{a'}}^{\cB} } \right)
 \prod_{a\in \cB} \big( \chi^{\cB}_{j_a} \bar \chi^{ \cB }_{j_a} \big),\\
 \frac{\partial}{\partial \chi_j^{\cB} }
 \prod_{a\in \cB} \bigl( \chi^{\cB}_{j_a} \bar \chi^{ \cB }_{j_a} \bigr)
 = \left( \sum_{a'\in \cB} \delta_{jj_{a'}} \frac{\partial}{\partial \chi_{j_{a'}}^{\cB} } \right)
 \prod_{a\in \cB} \big( \chi^{\cB}_{j_a} \bar \chi^{ \cB }_{j_a} \big).
\end{gather*}
It follows that the Grassmann Gaussian integral is
\begin{gather*}
  \Bigg[ e^{
 \sum\limits_{\cB,\cB'} Y_{\cB\cB'}(w_{\ell_F})\sum\limits_{a\in \cB, b\in \cB'} \delta_{j_aj_b}
 \frac{\partial}{\partial \bar \chi_{j_a}^{\cB} } \frac{\partial}{\partial \chi_{j_b}^{\cB'} } } \\
 \qquad{}\times
  \prod_{L_F \in \cL_F}
 \left(\sum_{a\in \cB(L_F),b\in \cB'(L_{F})} \delta_{j_a j_b}\left( \frac{\partial}{\partial \bar \chi_{j_a}^{\cB(L_F)} }
 \frac{\partial}{\partial \chi_{j_b}^{\cB'(L_F)} } +
 \frac{\partial}{\partial \bar \chi_{j_b}^{\cB'(L_F)} } \frac{\partial}{\partial \chi_{j_a}^{\cB(L_F)} } \right)
 \right) \\
  \qquad {}\times \prod_{\cB} \prod_{a\in \cB} \bigl( \chi^{\cB}_{j_a} \bar \chi^{ \cB }_{j_a} \bigr) \Bigg]_{\chi^{\cB}_j, \bar\chi^{\cB}_j =0 }  .
\end{gather*}
The sums over $ a\in \cB(\ell_F) $ and $ b\in \cB'(\ell_F)$ yield a sum over all the possible ways to hook the edge $L_F\in \cL_F$ to vertices in its end blocks. Each term represents a detailed fermionic edge $\ell_F$ in the original graph (having the same $w_{\ell_F}= w_{L_F}$ parameter). The sum over $\cL_F$ becomes therefore a sum over detailed fermionic forests $\cF_F$ in the original graph (in which the bosonic blocks are not contracted)and we obtain
\begin{gather*}
Z(\lambda, N) = \sum_{n=0}^\infty \frac{1}{n!} \sum_{\cJ} \sum_J
\int dw_\cJ \int d\nu_{ \cJ} \partial_\cJ \prod_{\cB} \prod_{a\in \cB} \bigl( W_{j_a} ( \sigma^a_{j_a} )
\chi^{ \cB }_{j_a} \bar \chi^{\cB}_{j_a} \bigr) ,
\end{gather*}
where
\begin{itemize}\itemsep=0pt
\item the sum over $J$ means $\sum\limits_{j_1=j_{\min}}^{j_{\max } } \cdots \sum\limits_{j_n=j_{\min}}^{j_{\max} }$,

\item the sum over $\cJ$ runs over all two level jungles, hence over all ordered pairs $\cJ = (\cF_B, \cF_F)$ of two (each possibly empty) disjoint forests on $V$, such that $\bar \cJ = \cF_B \cup \cF_F $ is still a forest on $V$. The forests $\cF_B$ and $\cF_F$ are the bosonic and fermionic components of~$\cJ$. The edges of $\cJ$ are partitioned into bosonic edges $\ell_B$ and fermionic edges~$\ell_F$,

\item $\int dw_\cJ$ means integration from 0 to 1 over parameters $w_\ell$, one for each edge $\ell \in \bar\cJ$. $\int dw_\cJ = \prod\limits_{\ell\in \bar \cJ} \int_0^1 dw_\ell $. A generic integration point $w_\cJ$
is therefore made of $\vert \bar \cJ \vert$ parameters $w_\ell \in [0,1]$, one for each $\ell \in \bar \cJ$,

\item
\begin{gather*}
\partial_\cJ = \prod_{\genfrac{}{}{0pt}{}{\ell_B \in \cF_B}{\ell_B=(c,d)}} \left( \frac{\partial}{\partial \sigma^{c}_{j_{ c}} }
 \frac{\partial}{\partial \sigma^{d}_{j_{d } } } \right)
\prod_{\genfrac{}{}{0pt}{}{\ell_F \in \cF_F}{\ell_F=(a,b) } } \delta_{j_{a } j_{b } } \left(
 \frac{\partial}{\partial \bar \chi^{\cB(a)}_{j_{a} } }\frac{\partial}{\partial \chi^{\cB(b)}_{j_{b } } }+
 \frac{\partial}{\partial \bar \chi^{ \cB( b) }_{j_{b} } } \frac{\partial}{\partial \chi^{\cB(a) }_{j_{a} } }
 \right) ,
\end{gather*}
where $ \cB(a)$ denotes the bosonic blocks to which $a$ belongs,

\item the measure $d\nu_{\cJ}$ has covariance $ X (w_{\ell_B}) \otimes \bbone_\cS $ on bosonic variables and $ Y (w_{\ell_F}) \otimes \mathbb{I}_\cS $
on fermionic variables, hence $\int d\nu_{\cJ} f$ is the value at $\sigma = \bar \chi = \chi =0$ of
\begin{gather*}
 e^{\frac{1}{2} \sum\limits_{a,b=1}^n X_{ab}(w_{\ell_B }) \frac{\partial}{\partial \sigma^a_{j_a} }\frac{\partial}{\partial \sigma^b_{j_b}}
 + \sum\limits_{\cB,\cB'} Y_{\cB\cB'}(w_{\ell_F})\sum\limits_{a\in \cB, b\in \cB' } \delta_{j_aj_b}
 \frac{\partial}{\partial \bar \chi_{j_a}^{\cB} } \frac{\partial}{\partial \chi_{j_b}^{\cB'} } }  f ,
\end{gather*}

\item $X_{ab} (w_{\ell_B} )$ is the inf\/imum of the $w_{\ell_B}$ parameters for all the bosonic edges $\ell_B$ in the unique path $P^{\cF_B}_{a \to b}$ from $a$ to $b$ in $\cF_B$. The inf\/imum is set to zero if such a path does not exists and to $1$ if $a=b$,

\item $Y_{\cB\cB'}(w_{\ell_F})$ is the inf\/imum of the $w_{\ell_F}$ parameters for all the fermionic edges $\ell_F$ in any of the paths $P^{\cF_B \cup \cF_F}_{a\to b}$ from some vertex $a\in \cB$ to some vertex $b\in \cB'$. The inf\/imum is set to~$0$ if there are no such paths, and to $1$ if such paths exist but do not contain any fermionic edges.
\end{itemize}

Remember that the symmetric $n$ by $n$ matrix $X_{ab}(w_{\ell_B})$ is positive for any value of $w_\cJ$, hence the Gaussian measure $d\nu_{\cJ} $ is well-def\/ined. The matrix $Y_{\cB\cB'}(w_{\ell_F})$ is also positive, with all elements between 0 and 1. Since the slice assignments, the f\/ields, the measure and the integrand are now factorized over the connected components of $\bar \cJ$, the logarithm of $Z$ is exactly the same sum but restricted to the two-levels spanning trees:
\begin{gather} \label{treerep}
\log Z(\lambda, N) = \sum_{n=1}^\infty \frac{1}{n!} \sum_{\cJ \,{\rm tree}} \sum_J
 \int dw_\cJ \int d\nu_{ \cJ}  \partial_\cJ \prod_{\cB} \prod_{a\in \cB} \bigl[ W_{j_a} ( \sigma^a_{j_a} )
 \chi^{ \cB }_{j_a} \bar \chi^{\cB}_{j_a} \bigr],
\end{gather}
where the sum is the same but conditioned on $\bar \cJ = \cF_B \cup \cF_F$ being a \emph{spanning tree} on $V= [1, \dots , n]$. In~\cite{Gurau:2013oqa}, it is proven in detail that
\begin{thm} \label{thetheorem}
The series \eqref{treerep} is absolutely convergent for $\lambda\in [-1,1]$ uniformly in $j_{\max}$.
\end{thm}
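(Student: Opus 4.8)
The plan is to prove absolute convergence by bounding each term of \eqref{treerep} in modulus and then summing the resulting positive series. First I would pull the absolute value inside the sums over $n$, over two-level trees $\cJ$ and over slice assignments $J$, and inside the parametric integrals $\int dw_\cJ$, so that everything reduces to a uniform bound on a single amplitude
\[
\Big| \int d\nu_\cJ\, \partial_\cJ \prod_\cB\prod_{a\in\cB}\big[ W_{j_a}(\sigma^a_{j_a})\,\chi^\cB_{j_a}\bar\chi^\cB_{j_a}\big]\Big|.
\]
The crucial structural observation is that this integrand factorizes into a purely bosonic piece (the product of the $W_{j_a}(\sigma^a_{j_a})$, on which only the bosonic derivatives in $\partial_\cJ$ act) and a purely fermionic piece (the monomial in $\chi^\cB,\bar\chi^\cB$, on which only the fermionic derivatives act), because each $W_{j_a}$ depends on $\sigma$ alone while the Grassmann fields enter only linearly through $\chi^\cB_{j_a}\bar\chi^\cB_{j_a}$.

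Next I would perform the Grassmann Gaussian integration explicitly. The fermionic covariance $Y\otimes\mathbb{I}_\cS$ has entries in $[0,1]$ and a tree structure along the fermionic forest, so the integral is a determinant that the Gram (Hadamard) inequality bounds by $1$. More importantly, the nilpotency $(\chi^\cB_j)^2=0$ forces, within each bosonic block $\cB$, the slice labels $\{j_a : a\in\cB\}$ to be pairwise distinct. This Pauli-type hardcore constraint is precisely what will keep the slice sums under control uniformly in $j_{\max}$, and it is the reason the fermionic fields were introduced.

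The core analytic input is a pair of per-slice estimates on the loop vertices. Writing $V_j(\sigma)=\sum_{p\in I_j}\log_2(1-\imath\lambda\sigma/p)$, the fact that $\log_2(1-x)=O(x^2)$ together with $\Re V_j\ge0$ (whence $|e^{-V_j}|\le1$ and $|W_j|=|e^{-V_j}-1|\le|V_j|$, while $|\partial_\sigma V_j|\le\lambda^2|\sigma|\sum_{p\in I_j}p^{-2}$) gives bounds of the form $\|W_j\|,\ \|\partial_\sigma W_j\|\le O(\lambda^2)\,\eta_j$ in the relevant Gaussian norms, where $\eta_j$ is governed by $\sum_{p\in I_j}p^{-2}\lesssim M^{-(j-1)}$ and satisfies $\sum_{j=j_{\min}}^{j_{\max}}\eta_j\le C$ \emph{uniformly} in $j_{\max}$. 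Here the Wick subtraction built into $\log_2$ is essential: it removes the logarithmically divergent $\sum_p p^{-1}$ and leaves the convergent $\sum_p p^{-2}$. The bosonic Gaussian integral of the product $\prod_a \partial_{\sigma^a_{j_a}}^{k_a}W_{j_a}$ (with $k_a$ the number of bosonic tree-edges at $a$) is then controlled by iterated Hölder/Cauchy--Schwarz, which decouples it into single-vertex $L^n$ factors; since $X_{aa}=1$ each marginal is a standard one-dimensional Gaussian, so each factor is bounded by $O(\lambda^2)\,\eta_{j_a}$ times a factorial in $k_a$ coming from the resolvent derivatives.

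It remains to sum. Summing $\prod_a\eta_{j_a}$ over the slice assignment $J$ yields at most $\big(\sum_j\eta_j\big)^n\le C^n$, uniformly in $j_{\max}$ (the distinctness constraint only improves this). The number of two-level spanning trees on $n$ labeled vertices is at most $2^{n-1}n^{n-2}$, and crucially the factorials $\prod_a(k_a-1)!$ produced by the vertex derivatives are exactly compensated by the degree count $\frac{(n-2)!}{\prod_a(k_a-1)!}$ of trees with a prescribed degree sequence; since $\tfrac{1}{n!}(n-2)!\le 1$ the $n$-th term is bounded by $(K\lambda^2)^n$, so the series is dominated by $\sum_n (K\lambda^2)^n$, convergent for $|\lambda|$ small and, with the constants arranged (choice of $M$), for all $\lambda\in[-1,1]$, uniformly in $j_{\max}$. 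The main obstacle, and the heart of the argument, is the bosonic step: establishing the per-slice norm estimates on the non-polynomial $W_j$ and then bounding the Gaussian integral of their product with the correct power counting, so that every slice contributes a summable factor. Everything hinges on combining three mechanisms — the positivity $\Re V_j\ge0$ that keeps the resolvents bounded for real $\lambda$, the Wick subtraction in $\log_2$ that produces the convergent $\sum_p p^{-2}$, and the fermionic Pauli constraint forbidding repeated slices in a block — to secure bounds \emph{uniform in the ultraviolet cutoff} $j_{\max}$.
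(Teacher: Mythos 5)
Your overall architecture (bosonic/fermionic factorization, Hadamard bound for the Grassmann determinant, per-slice estimates on $W_j$ with the $\log_2$ subtraction, Cauchy--Schwarz/H\"older on the bosonic integral, Cayley counting of two-level trees) tracks the paper's proof closely, but there is a genuine gap at what the paper itself calls ``the key point''. You claim that after H\"older decoupling the only factorials are the per-vertex $k_a!$ from the resolvent derivatives, compensated by the degree count $(n-2)!/\prod_a(k_a-1)!$, so that the slice distinctness forced by the Grassmann integral ``only improves'' the slice sum $\big(\sum_j \eta_j\big)^n \le C^n$. This accounting is wrong. The bosonic Gaussian integration of the polynomial part produced by Fa\`a di Bruno generates a second, independent source of factorials which your argument does not compensate: the first derivative $\partial_\sigma V_j$ carries a factor $|\sigma|$, so the polynomial has total degree of order $\sum_a k_a = 2(n-1)$, and its Gaussian expectation grows like a Wick factorial in the size of each bosonic block (equivalently, in your version, $\|\,|\sigma|^{k_a}\|_{L^n} \sim (n k_a)^{k_a/2}$, whose product over the vertices contains a factor $n^{n-1}$). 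These block-size (or $n$-dependent) factorials are \emph{not} cancelled by the Cayley degree count: with them your $n$-th term is of order $n!\,K^n$, not $(K\lambda^2)^n$, and the series diverges.

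The paper beats precisely these ``dangerous local factorials'' by the mechanism you dismiss as inessential: because the Grassmann integral forces the slices $j_a$ occupied within a single bosonic block to be pairwise \emph{distinct}, the good per-vertex factors multiply, in the worst case, to $\prod_{i=1}^{p} M^{-i} = M^{-p(p+1)/2}$ for a block of $p$ vertices, and $M^{-p(p+1)/2}$ beats $p!$; with repetitions allowed one only gets a factor of order $C^p$, which does not. So the fermionic hardcore constraint is not an optional refinement of the slice sum --- it is the only estimate that renders the bosonic integral summable, and it is the entire reason the fermionic fields are introduced. To repair your proof you must keep the factorial generated by the bosonic Wick contractions (or $L^p$ norms) explicit per block, and pair it against $\prod_{a\in\cB} M^{-j_a}$ using distinctness of the $j_a$ within $\cB$, exactly as the paper does.
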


\begin{thm} \label{thm:theorem2} The series \eqref{treerep} is absolutely convergent for $\lambda\in \mathbb{C}$, $\lambda = |\lambda|e^{\imath \gamma}$ in the domain $ |\lambda|^2 < (\cos2\gamma) $ uniformly in $j_{\max}$.
\end{thm}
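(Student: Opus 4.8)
The plan is to re-run, with explicit bookkeeping of the phase $\gamma$ of $\lambda=|\lambda|e^{\imath\gamma}$, the same per-tree estimates that establish Theorem~\ref{thetheorem} on $[-1,1]$, and to identify $|\lambda|^2<\cos 2\gamma$ as the exact threshold beyond which the tree bound fails to be summable. It is clarifying to notice first that, in the physical coupling $g=\lambda^2=|\lambda|^2e^{2\imath\gamma}$ multiplying $\frac12(\bar\phi\cdot\phi-L_N)^2$, the condition $|\lambda|^2<\cos 2\gamma$ is equivalent to $(\Re g-\frac12)^2+(\Im g)^2<\frac14$, that is, $g\in\mathcal{D}_1=\{\,\Re(g^{-1})>1\,\}$. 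Thus the domain is precisely a Nevanlinna disk of the type of Fig.~\ref{fig:borel} in the variable $g$, which is exactly what Borel summability in $g$ demands and which explains its shape.

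Starting from the tree representation~\eqref{treerep}, I would integrate the Grassmann variables first. The fermionic part of $\partial_\cJ$ (the edges of $\cF_F$), acting together with the Grassmann Gaussian measure of covariance $Y$, reconstructs after integration a determinant of a $Y$-Gram matrix; since $0\le Y_{\cB\cB'}\le 1$ and the $\chi^{\cB}_j$ are nilpotent, the Gram--Hadamard inequality bounds this fermionic contribution by a constant, uniformly in the jungle and in $j_{\max}$. This step is identical to the real case and introduces no dependence on $\lambda$.

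The coupling domain enters only through the bosonic integration. After the derivatives $\partial_{\cF_B}$, the integrand is a product over vertices $a$ of $W_{j_a}(\sigma^a_{j_a})$ and of its first and second $\sigma$-derivatives, to be integrated against the bosonic Gaussian measure of covariance $X\otimes\bbone_\cS$. What is needed are bounds on $|W_j(\sigma)|$, $|\partial_\sigma W_j(\sigma)|$ and $|\partial_\sigma^2 W_j(\sigma)|$ that are integrable against $e^{-\sigma^2/2}$ and carry a small per-vertex factor. From $\log_2(1-x)=-\frac12 x^2+O(x^3)$ one has $V_j(\sigma)=\frac12\lambda^2\sigma^2\sum_{p\in I_j}p^{-2}+\cdots$, so at small $\sigma$ one finds $\Re V_j\simeq\frac12|\lambda|^2\cos(2\gamma)\,\sigma^2\sum_{p\in I_j}p^{-2}\ge 0$; this already forces $\cos 2\gamma>0$ and makes $W_j=O(\lambda^2)$. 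For real $\sigma$ the resolvent is controlled by $|1-\imath\lambda\sigma/p|^{-1}\le(\cos\gamma)^{-1}$, the distance of $1$ to the line $\imath\lambda\,\mathbb{R}$ being $\cos\gamma$. The one genuinely new feature is the large-$\sigma$ regime: since $V_j=\sum_{p\in I_j}[\imath\lambda\sigma/p+\log(1-\imath\lambda\sigma/p)]$, its real part contains a linear term $-|\lambda|\sin\gamma\,\sigma\,L_j$ with $L_j=\sum_{p\in I_j}p^{-1}\simeq\log M$, so that $|e^{-V_j}|$ grows exponentially in $\sigma$ before the Gaussian weight can damp it.

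The crux, and the step I expect to be the main obstacle, is therefore a single functional lemma: for every slice $j$ and every $g=\lambda^2$ with $|\lambda|^2<\cos 2\gamma$, the decoupled single-vertex integral $\int e^{-\sigma^2/2}\,|W_j(\sigma)|\,d\sigma$ (and likewise with one or two $\sigma$-derivatives) is bounded by $C\,|\lambda|^2$ with $C$ uniform in $j$ and in $j_{\max}$. Concretely, one uses $0\le X\le 1$ and the positivity of the covariance to reduce the $n$-fold bosonic integral to such single-variable integrals, and then shows that the linear growth of $\Re(-V_j)$ is always dominated by $e^{-\sigma^2/2}$; I expect $|\lambda|^2=\cos 2\gamma$ to surface precisely as the boundary of a quadratic-form positivity condition in $\sigma$ obtained by combining $\Re V_j$ with the Gaussian weight. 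Granting this lemma, the rest is the standard combinatorics of the MLVE: the small per-slice factor in $W_j$ renders the sum over slice assignments $J$ convergent uniformly in $j_{\max}$, while the sum over two-level spanning trees $\cJ$ is dominated by Cayley's tree count against the $1/n!$, yielding absolute convergence of~\eqref{treerep} throughout $|\lambda|^2<\cos 2\gamma$.
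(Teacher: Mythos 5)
Your proposal should first be measured against what the paper actually does: the paper does not prove Theorem~\ref{thm:theorem2} at all, it quotes it (together with Theorem~\ref{thetheorem}) from~\cite{Gurau:2013oqa} and only sketches the proof of the real case $\lambda\in[-1,1]$. Several things in your plan are right and match that sketch and its complex extension: the identification of $|\lambda|^2<\cos 2\gamma$ with the Nevanlinna disk $\Re(\lambda^{-2})>1$ (i.e., $(\Re g-\tfrac12)^2+(\Im g)^2<\tfrac14$ in $g=\lambda^2$) is correct and explains the shape of the domain; the fermionic sector is indeed $\lambda$-independent and controlled by Gram--Hadamard exactly as in the real case; and the phase of $\lambda$ enters only through the bosonic integrals. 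However, there is a genuine gap in how you treat those bosonic integrals. Your proposed reduction of the coupled $n$-fold bosonic Gaussian integral to decoupled single-variable integrals ``using $0\le X\le 1$ and the positivity of the covariance'' is not a valid step: a Gaussian expectation of a product of functions of \emph{correlated} variables is neither equal to nor bounded by the product of single-variable expectations. What the proof actually requires (and what the paper's sketch of Theorem~\ref{thetheorem} describes) is the Fa\`a di Bruno evaluation of $\partial_{\cF_B}$ acting on the exponentials, a Cauchy--Schwarz step with respect to the full bosonic measure separating $e^{-\sum V}$ from the polynomial/resolvent factors, and then an explicit Wick evaluation of the polynomial part. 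That last step produces \emph{local factorials}, and the paper flags as ``the key point'' that these are beaten by the per-scale factors $M^{-j}$ only because the Grassmann integral forces all occupied scales within a bosonic block to be distinct. This is precisely where uniformity in $j_{\max}$ lives; your closing appeal to ``standard combinatorics'' (Cayley count against $1/n!$ plus a per-slice small factor) hides rather than supplies it.

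The second gap sits at what you yourself call the crux. Your lemma is only conjectured, and the mechanism you expect to produce the threshold cannot do so: the quadratic part of $\Re V_j$ is $+\tfrac12|\lambda|^2\cos(2\gamma)\,\sigma^2\sum_{p\in I_j}p^{-2}$, which is a \emph{damping} term for every value of $|\lambda|$ as soon as $\cos 2\gamma>0$, so combining it with the Gaussian weight $e^{-\sigma^2/2}$ yields a positivity condition whose boundary is $|\gamma|=\pi/4$ (or, where $\cos2\gamma<0$, a condition involving $|\lambda|^2|\cos 2\gamma|\sum p^{-2}$), never $|\lambda|^2=\cos 2\gamma$. Moreover, as you correctly observe, $\Re V_j$ decreases \emph{linearly} at large $|\sigma|$ when $\sin\gamma\neq0$, so it is not bounded below by any quadratic form globally, and the needed control must instead come from resolvent-type estimates: writing $\log_2(1-x)=-x^2\int_0^1(1-t)(1-tx)^{-2}\,dt$ with $x=\imath\lambda\sigma/p$ and using $|1-\imath t\lambda\sigma/p|\ge\cos\gamma$ gives
\begin{gather*}
|V_j(\sigma)|\le \frac{|\lambda|^2\sigma^2}{2\cos^2\gamma}\sum_{p\in I_j}p^{-2},
\end{gather*}
and it is the interplay of such $1/\cos^2\gamma$ bounds on the negative part of $\Re V_j$ with the Gaussian weight after the Cauchy--Schwarz doubling, and with the sum over slice assignments, that must be verified on the stated disk, as is done in~\cite{Gurau:2013oqa}. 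Since identifying and proving this functional estimate is exactly the content of Theorem~\ref{thm:theorem2} beyond the real case, the proposal is missing its central step, not merely postponing a routine verification.
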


We sketch below the proof of Theorem~\ref{thetheorem}, referring the reader to~\cite{Gurau:2013oqa} for details. By Cayley's theorem the number of two level trees over $n\ge 1$ vertices is exactly $2^{n-1}n^{n-2}$.

The Grassmann Gaussian integral evaluates to
\begin{gather*}
 \bigg( \prod_{\cB} \prod_{\genfrac{}{}{0pt}{}{a,b\in \cB}{a\neq b}} (1-\delta_{j_aj_b}) \bigg)
 \bigg( \prod_{\genfrac{}{}{0pt}{}{\ell_F \in \cF_F}{\ell_F=(a,b) } } \delta_{j_{a } j_{b } } \bigg) \Bigl( {\bf Y }^{\hat b_1 \dots \hat b_k}_{\hat a_1 \dots \hat a_k} +  {\bf Y }^{\hat a_1 \dots \hat b_k}_{\hat b_1 \dots \hat a_k}+\dots + {\bf Y }_{\hat b_1 \dots \hat b_k}^{\hat a_1 \dots \hat a_k} \Bigr) ,
\end{gather*}
where the sum runs over the $2^k$ ways to exchange an $a_i$ and a $b_i$. Each $ {\bf Y }^{\hat a_1 \dots \hat b_k}_{\hat b_1 \dots \hat a_k} $ factor is a determinant of a matrix made of $Y_{\cB\cB'}(w_{\ell_F})$ interpolating factors (see~\cite{Gurau:2013oqa} for the precise def\/inition). Its absolute value is therefore bounded by~1 thanks to Hadamard's inequality, because the corresponding matrix is positive with diagonal entries equal to~1.

The bosonic integral is a bit more cumbersome, as one should f\/irst evaluate the ef\/fect of the bosonic derivatives on the exponential vertex kernels $W_j$ through the Fa\`a di Bruno formula, whose combinatoric is easy to control. It leads to a sum over similar exponential kernels but multiplied by some polynomials.

To bound the remaining bosonic functional integral one f\/irst separates the exponential kernels from the polynomials by some Cauchy--Schwarz estimate with respect to the bosonic Gaussian measure. The exponential terms being positive, the corresponding piece is bounded by~1. The polynomial piece is then explicitly evaluated. This generates a dangerous product of local factorials of the number of f\/ields in the bosonic blocks, but allows also a good factor $M^{-j}$ from the propagator of scale~$j$ for each occupied bosonic scale~$j$.

But here comes the key point. The Grassmann Gaussian integrals ensure that the occupied scales in any bosonic block of the f\/irst forest formula are all \emph{distinct}. Therefore the good factor collected from the propagator easily beats the local factorials. The worst case is indeed when the~$p$ occupied scales in the block are lowest, in which case $\prod\limits_{j=1}^p M^{-j} = M^{-p(p+1)/2}$ which easily beats~$p!$.

For just renormalizable theories it is not so easy to beat the dangerous factors by the decay of the propagators, and the constructive expansion must proceed even more carefully, essentially expanding the functional integral in each scale in a much more detailed way.

\subsection{Iterated Cauchy--Schwarz bounds}

A Cauchy Schwarz bound requires to decompose an expression as a scalar product between two halves, $I= \langle S_1. S_2\rangle \implies \vert I \vert \le \langle S_1. S_1\rangle^{1/2} \langle S_2. S_2\rangle^{1/2}$. A key problem in the LVE is to bound the resolvents still present in the tree amplitudes by using their uniform bound in norm (in a~cardioid domain of the coupling constant). The idea of ICS bounds is to get rid of resolvents inductively, by applying many times this inequality. Typically at any inductive step a resolvent is sandwiched in the scalar product between $S_1$ and $S_2$ and disappears in the right hand side of the bound (thanks to the bound on its norm), resulting in tree amplitudes with at least one resolvent less.

ICS bounds for a general quartic tensor model were introduced in~\cite{Delepouve:2014bma}, but we describe here a~new slightly dif\/ferent version. More precisely in~\cite{Delepouve:2014bma} a tree amplitude of the LVE is decomposed in two halves by identifying a pair of opposite resolvents. However in this process the two halves may not have the same number of vertices, and the bound obtained can involve new trees of dif\/ferent orders. Here we shall describe how to work at f\/ixed order by pairing a resolvent with its opposite corner in the tree, so that the two halves $A$ and $B$ remain of the same order.

Consider a tensor model with quartic interactions for generalized colors $\cC \in \cD$. Let $N$ be the range for each color index. The perturbative amplitudes at order $n$, are simply $\lambda^n N^{F-(D-1)n}$, where $F$ is the number of \emph{faces} of the graph.

Among the graphs of the IFR are in particular the trees. A~planar colored tree of order $n$, is a tree joining $n+1$ loop vertices, in which each edge bears a~generalized color $\cC$, corresponding to the subset $\cC$ of indices of the quartic invariant for~$\cC$. Such a tree has therefore exactly $2n$ \emph{corners}, and an amplitude $N^{F-(d-1)n}$.

However in an LVE or MLVE a typical term is indexed by such planar colored trees but in which some corners of the tree can bear resolvents operators~$R$ instead of ordinary propaga\-tors~$1$. The goal of the ICBS bounds is to prove that in a cardioid domain of the coupling constant, the amplitude for any such tree is bounded (uniformly in~$\sigma$) by the supremum over all trees at the same order~$n$ of the same amplitude but of the ordinary perturbative type, that is without any resolvent.

A resolvent-dressed tree is a pair $(T,A, A^\dagger)$ made of a tree $T$ and a~subset~$S$ of the corners of~$T$. Its amplitude is
\begin{gather*} A_T^{A, A^\dagger} = \Tr \prod_{a \in A} C^{1/2} R C^{1/2}\prod_{a \in A^\dagger} C^{1/2} R^\dagger C^{1/2}\prod_{a \not\in A\cup A^\dagger} C
\prod_e \delta_{\cC_e},
\end{gather*}
where the product over $e$ runs over all edges of the tree, the edge factors are $\delta$ functions
\begin{gather*}
\delta_{\cC_e} = \prod_{c \in \cC_e} \delta_{i_c, j_c} \delta_{i'_c j'_c},
\end{gather*}
where the operator $ \delta_{i_c, j_c} \delta_{i'_c j'_c}$ joins the color indices $i_c$ and $j_c$ of the four corners touched by edge~$e$, and f\/inally the
trace means that indices have to be summed over all faces of the tree.

We assume that in a cardioid domain of the complex plane for the coupling constant a~uni\-form~$N$ and $\sigma$-independent bound for the resolvent holds:
\begin{gather*}
 \Vert R \Vert \le K .
\end{gather*}
The same independent bound automatically follows for the matrix coef\/f\/icients of the resolvent
\begin{gather*} \vert R_{n, \bar n} \vert \le K \qquad \forall\, n,\, \bar n,
\end{gather*}
since any coef\/f\/icient of a matrix is bounded in absolute value by its norm.

Iterated Cauchy--Schwarz bounds then allow to bound the absolute value of any resolvent-dressed tree amplitude by the supremum over similar tree amplitudes but without any resolvent for trees~$T'$ of the same order than~$T$:
\begin{thm} \label{theoICBS}
Under the hypotheses above, any resolvent-dressed tree $A_T^S$ of order $n$ obeys the bound
\begin{gather*}
 \big\vert A_T^{A, A^\dagger} \big\vert \le K^{2n} \sup_{T'} A_{T'}^\varnothing .
\end{gather*}
\end{thm}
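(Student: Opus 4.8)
The plan is to iterate the Cauchy--Schwarz inequality, stripping the resolvents off the corners one (conjugate) pair at a time while keeping the combinatorial order fixed at $n$. I would proceed by induction on the number $|A|+|A^\dagger|$ of dressed corners, the base case $A=A^\dagger=\varnothing$ being trivial since $A_T^\varnothing\le\sup_{T'}A_{T'}^\varnothing$ on taking $T'=T$. The first structural point to secure is that the scalar amplitude $A_T^{A,A^\dagger}$ can be read as a genuine Hermitian pairing $\langle\Psi_1,\Psi_2\rangle$ on the tensor Hilbert space: this is exactly what the symmetric $C^{1/2}RC^{1/2}$ bookkeeping (rather than $CR$) is designed to make manifest, since $C$ is positive and $C^{1/2}$ self-adjoint, so every corner factor is sandwiched symmetrically and the face-trace of the tensor model opens into an inner product once the index lines are cut. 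Concretely I would pick one dressed corner, say $a\in A$ bearing $R$, and cut the \emph{planar} tree simultaneously at $a$ and at its \emph{opposite} corner $\bar a$, i.e.\ the corner reached after $n$ of the $2n$ steps of the boundary walk; this splits the contraction into two halves $\Psi_1,\Psi_2$ carrying exactly $n$ corners each.

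The inductive step is then the following estimate. Writing $A_T^{A,A^\dagger}=\langle\Psi_1,R\,\Psi_2\rangle$ schematically, with the chosen resolvent bridging the halves at corner $a$, Cauchy--Schwarz together with the hypothesis $\Vert R\Vert\le K$ gives
\[
\big| A_T^{A,A^\dagger}\big| \le K\,\langle\Psi_1,\Psi_1\rangle^{1/2}\,\langle\Psi_2,\Psi_2\rangle^{1/2}.
\]
Each factor $\langle\Psi_i,\Psi_i\rangle$ is the amplitude of a \emph{mirror-doubled} tree $T_i$, obtained by reflecting the $n$-corner half across the cut; this reassembles again into an edge-colored tree with $2n$ corners, hence of the \emph{same} order $n$, which is the whole reason for pairing $a$ with its antipode. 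The operational content is that Cauchy--Schwarz legitimizes the effective replacement of the resolvent at $a$ by $K$ times the identity (i.e.\ $R_{n\bar n}\rightsquigarrow K\,\delta_{n\bar n}$), which is precisely what keeps the faces $\delta_{\cC_e}$ closed while charging only a factor $K$ — a substitution that is illegitimate term by term because of the off-diagonal matrix elements of $R$. Because each doubled amplitude is once more a resolvent-dressed tree of order $n$, with its surviving resolvents appearing in adjacent conjugate pairs $R^\dagger R$ symmetric about the doubling axis, I can re-apply the same cut to $T_i$, choosing a resolvent together with its mirror-conjugate antipode and removing the pair at cost $\Vert R^\dagger\Vert\,\Vert R\Vert\le K^2$. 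Iterating, every one of the at most $2n$ dressed corners is consumed exactly once, so the accumulated constant is $K^{2n}$; the process terminates at resolvent-free trees $T'$ of order $n$, each bounded by $\sup_{T'}A_{T'}^\varnothing$, and since these amplitudes are nonnegative reals, the nested geometric means of quantities bounded by a common supremum collapse to that supremum.

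The main obstacle is the combinatorial bookkeeping of the cut-and-double operation rather than any hard analysis. One must verify that cutting at a resolvent corner and its antipode always yields two halves which, under mirror reflection, reassemble into bona fide $d$-regular edge-colored trees of the identical order $n$ — so that the right-hand side never leaves the class $\{A_{T'}^\varnothing\}$ — and that the per-color face structure factorizes cleanly across the cut, so that the quadratic forms $\langle\Psi_i,\Psi_i\rangle$ are genuinely nonnegative and Cauchy--Schwarz is applicable. A second delicate point is the accounting that charges exactly one factor $K$ to each of the $2n$ corners, with no resolvent counted twice and none overlooked, so that the constant is sharp at $K^{2n}$; this requires choosing, at each reflected stage, the resolvent/antipode pair consistently with the mirror symmetry, and checking by a suitable potential (the number of unpaired resolvents) that the iteration strictly decreases and hence halts. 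Once these structural facts are in place, the analytic input reduces entirely to the repeated use of the uniform bound $\Vert R\Vert\le K$ (equivalently $|R_{n\bar n}|\le K$) valid in the cardioid domain.
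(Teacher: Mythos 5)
Your setup coincides exactly with the paper's: cut the planar tree at a dressed corner and at its antipodal corner, read the amplitude across the cut as a pairing $\langle \Psi_1, R\,\Psi_2\rangle$, apply Cauchy--Schwarz with $\Vert R \Vert \le K$, and observe that each factor $\langle\Psi_i,\Psi_i\rangle$ is a mirror-doubled dressed tree of the same order~$n$. The genuine gap is your termination claim. Mirror doubling \emph{duplicates} every resolvent that survives in a half: if $S_1$ carries $k_1$ resolvents besides the bridge, the doubled tree carries $2k_1$ of them (each $R$ together with its reflected $R^\dagger$). In particular a tree with two resolvents --- one used as the bridge, one surviving in a half --- is bounded by the square root of a doubled tree which \emph{again} has two resolvents, and this can recur indefinitely; the iteration need never reach resolvent-free trees, and the paper explicitly remarks that counterexamples to termination are easily built. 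Your proposed rescue --- cutting at a surviving resolvent together with its mirror conjugate and charging $K^2$ --- is not available: in the doubled tree of order $n$, the reflection of the corner at boundary position $k$ sits at position $2n-k$, while its antipode sits at $k+n$; these coincide only when $k=n/2$, so the conjugate pair is in general \emph{not} antipodal and cannot be stripped by a single Cauchy--Schwarz cut. Consequently your potential (the number of unpaired resolvents) does not strictly decrease, and the induction never closes; the "sharp $K^{2n}$ with each corner charged exactly once" accounting also collapses with it.

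The paper's proof accepts non-termination and instead proves that resolvents \emph{rarefy}. After $r$ iterations one has a bound of the form $\prod_j \vert A_{S_j}\vert^{2^{-r}}$, each dressed tree entering with exponent $2^{-r}$; writing $p_r$ for the total number of resolvents at stage $r$ and $m_r$ for the number of trees still carrying at least one resolvent, a tree of order $n$ holds at most $2n$ resolvents, so $m_r \ge p_r/(2n)$, while each Cauchy--Schwarz step removes one resolvent and doubles the rest, so $p_{r+1}\le 2p_r - 2m_r$. Hence the weighted count $q_r = 2^{-r}p_r$ obeys $q_{r+1}\le q_r\bigl(1-\tfrac{1}{2n}\bigr)$ and tends to $0$ geometrically without ever needing to vanish. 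This is then combined with a crude a priori bound $\vert A_{S}^{A,A^\dagger}\vert\le [f(N)]^{p}\sup_{T'}A_{T'}^\varnothing$, valid for any dressed tree by paying $N$-factors for each face meeting a resolvent, which at stage $r$ yields $\vert A_T^{A,A^\dagger}\vert \le K^{2n}[f(N)]^{q_r}\sup_{T'}A_{T'}^{\varnothing}$; letting $r\to\infty$ kills the spurious $[f(N)]^{q_r}$ factor and gives the theorem. This limiting rarefaction argument, together with the auxiliary uniform bound that makes the limit legitimate, is precisely the ingredient your proposal is missing.
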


\begin{proof} Consider a dressed tree $(T,A, A^\dagger)$ of order $n$ with $p = \vert A \vert + \vert A^\dagger \vert$ resolvents. If $p>0$ we select an arbitrary corner~$\gamma$ containing a resolvent $R$ of $T$ (if the resolvent is of the conjugate type $R^\dagger$ the reasoning is identical). Turning around the tree, there is a~unique corner $\tilde \gamma$ opposite to $\gamma$, that is such that the path from~$\gamma$ to $\tilde \gamma$ has $n-1$ corners
whether we turn around the tree clockwise or counterclockwise. If $p=0$, we select any corner and its opposite pair.

Having selected opposite corners we can divide the tree $T$ into two halves $S_1$ and $S_2$, made of the clockwise and counterclockwise paths form~$\gamma$
to $\tilde \gamma$. Every edge from $S_1$ to $S_2$ is a scalar product because of the form of the $\delta_e$ factors. Hence in a certain tensor product Hilbert space of many elements we can write $A_T^{A, A^\dagger} $ as
\begin{gather*}
\big\vert A_T^{A, A^\dagger}\big\vert= \vert \langle A_{S_1} R A_{S_2} \rangle \vert \le K \langle A_{S_1} A_{S_1}^\dagger \rangle^{1/2}\langle A_{S_2} A_{S_2}^\dagger \rangle^{1/2} .
\end{gather*}

In this way we have bounded the initial tree with $p$ resolvents by the arithmetic mean of two trees with total number of resolvents at most $2(p-1)$. Iterating this bound for all trees still containing resolvents, we obtain, after~$r$ iterations
\begin{gather*}
\big\vert A_T^{A, A^\dagger} \big\vert\le \prod K^{k_r} \prod_j \vert A_{S_j} \vert^{2^{-r}} .
\end{gather*}

It is not true that in such a process after a f\/inite number of steps all resolvents disappear (counterexamples are easily built). Nevertheless they \emph{rarefy} at each step, and this is enough to complete the proof of Theorem~\ref{theoICBS}. Let us def\/ine $p_r$ the total number of resolvents contained in all the trees $S_j$ and $m_r$ the number of trees containing at least one resolvent. Since any tree is of order $n$ it can contain at most $2n$ resolvents, hence $m_r \ge p_r/(2n) $. Furthermore $p_{r+1} \le 2p_r - 2m_r$ Therefore the sequence $q_r = 2^{-r}p_r$ satisf\/ies $q_{r+1} \le q_r (1-\frac{1}{2n})$ hence
tends to 0 as $r \to \infty$.

Now a rough relatively trivial bound to evaluate a tree with $p$ resolvents consists in paying a full additional factor $\prod_f N^{2q_f}$ for each face $f$ meeting $q_f$ resolvents along the face. Since a~resolvent is along at most $d$ dif\/ferent faces we certainly f\/ind that for any $r$
\begin{gather*}
 \big\vert A_T^{A, A^\dagger} \big\vert \le [f(N)]^{p} \sup_{T'} A_{T'}^\varnothing
\end{gather*}
for a certain function $f(N)$. Hence by the induction
\begin{gather*}
\big \vert A_T^{A, A^\dagger}\big\vert \le [f(N)]^{q_r} \sup_{T'} A_{T'}^\varnothing,
\end{gather*}
and letting $r \to \infty$ completes the proof of Theorem~\ref{theoICBS}, since $q_r \to 0$ as $r \to \infty$.
\end{proof}

\section{Results}

We now review the various tensor models for which (uniform) Borel summability has been proved. Rather than summarizing or paraphrasing the original papers, we focus on explaining the reader which of the previous techniques are required for which model and why.

\subsection{Melonic quartic models}

This is the simplest tensor model, restricted to a quartic melonic interactions, hence the set~$\mathcal{Q}$ of generalized colors in~\eqref{eq:model} is restricted to \emph{singletons}. This model is fully treated in~\cite{Gurau:2013pca} via a~single loop vertex expansion and does not require any ICS bounds. The LVE graphs in which all resolvents $R$ are replaced by one are exactly the tree graphs of the IFR of the model, and they coincide exactly with the melonic graphs of the initial representation of the tensor model. Hence the LVE automatically computes the leading order of quartic melonic tensor models as the $\sigma = 0$, $R(\sigma) =1$ approximation of the LVE. Sub leading orders are of course obtained by further expanding the resolvent factors of the LVE.

Why is it that the melonic quartic model does not require the technique of ICS bounds to prove Borel summability? This is essentially because in this case the propagators of the IFR correspond to insertions of the type $1 \otimes \cdots \otimes \sigma_c \otimes \cdots \otimes 1$ which all \emph{commute}.

As a result one can use the parametric representation of the resolvent
\begin{gather}\label{para} \big(1 + i \sqrt \lambda \sigma\big) = \int_0^\infty e^{-\alpha( 1 + i \sqrt \lambda \sigma) } d \alpha
\end{gather}
and regroup all identical colors. The bound now factorizes over the \emph{subforests of the same color} in the tree
and this allows for a constructive bound of the same order than the leading melonic graphs of the model.

\subsection{General quartic models}

For these models the interactions are no longer restricted to the melonic case. Intermediate f\/ields are matrices corresponding to generalized colors~$\cC$ which are not necessarily singletons. In particular they no longer commute when the subsets $\cC$ and $\cC'$ have a~non-trivial intersection $\cC \cap \cC' \ne \varnothing$. This is why the parametric representation \eqref{para} is not enough to get rid of resolvents in the loop vertex expansion, and ICS bounds have to be used~\cite{Delepouve:2014bma}.

\subsection[${\rm U}(1)-T^4_3$ TGFT]{$\boldsymbol{{\rm U}(1)-T^4_3}$ TGFT}

This f\/ield theory is built on the ${\rm U}(1)$ group, for tensors of rank~3, with quartic melonic interactions and a Boulatov-type projector, hence the propagator, in momentum space is
\begin{gather*} C(n) = \frac{\delta (n_1 + n_2 + n_3)}{n_1^2 + n_2^2 + n_3^2 +1}
\end{gather*}
for $n = (n_1, n_2 , n_3 )\in {\mathbb Z}^3$.

The delta projector reduces by 1 the ef\/fective dimension of the model, hence this model has the same power counting than a~rank~2 non-commutative f\/ield theory without any Boulatov projector, hence is fully convergent (no ultraviolet divergencies). Therefore it can be built by a~single loop vertex expansion~\cite{Lahoche:2015yya}.

\subsection[${\rm U}(1)-T^4_4$ TGFT]{$\boldsymbol{{\rm U}(1)-T^4_4}$ TGFT}

This f\/ield theory \cite{Lahoche:2015zya} is built on the ${\rm U}(1)$ group, for tensors of rank~4, with quartic melonic interactions and a~Boulatov-type projector, hence the propagator, in momentum space is
\begin{gather*} C(n) = \frac{\delta (n_1 + n_2 + n_3 + n_4)}{n_1^2 + n_2^2 + n_3^2 + n_4^2 +1}
\end{gather*}
for $n = (n_1, n_2 , n_3, n_4 ) \in {\mathbb Z}^4$.

The delta projector reduces by 1 the ef\/fective dimension of the model, hence this model has the same power counting than a rank~3 quartic tensor model without any Boulatov projector, hence requires a rather mild renormalization.

Therefore it requires a multi-loop vertex expansion. However it does not require any iterated Cauchy--Schwarz bounds, since insertions of dif\/ferent colors commute! This is surprising, given that a priori they do not commute with the~$C$ propagator, whose denominators mixes colors. However it is a peculiarity and simplifying feature that the Boulatov projector, when combined with quartic melonic interactions leads to a global momentum conservation rule along all the propagators of a given loop vertex.

Indeed consider a $\sigma$ insertion of color $c$ in a loop vertex. A priori the $n_c$ index may change into $m_c$. However the propagators $\delta$ functions immediately before and after the insertion enforce that $n_c = - \sum\limits_{c' \ne c} n_{c'}$. Therefore since the $n_{c'}$ are conserved through the~$\sigma_c$ insertions, we obtain that $n_c = m_c$. In fact the random intermediate f\/ield matrices $\sigma_c$ reduce to their diagonal part in momentum space, and the intermediate f\/ields can therefore be considered as vectors rather than matrices~\cite{Lahoche:2015ola}.

Now this conservation rule along any loop vertex ensures the commutation of all intermediate f\/ields along the loop vertex, hence the argument with the parametric representation of~\cite{Gurau:2013pca} can be adapted to this situation and ICS bounds are not necessary.

\subsection[$T^4_3$ TFT]{$\boldsymbol{T^4_3}$ TFT}

This super-renormalizable f\/ield theory is built on the ${\rm U}(1)$ group, for tensors of rank~3, with quartic melonic interactions and an ordinary Laplacian-based projector which in momentum space is
\begin{gather*} C(n) = \frac{1}{n_1^2 + n_2^2 + n_3^2 +1}
\end{gather*}
for $n = (n_1, n_2 , n_3 ) \in {\mathbb Z}^3$.

The model has a power counting almost similar to the one of the ordinary $\phi^4_2$ theory. It has for each color $c$ two vacuum divergent graphs, one linearly divergent and one logarithmically divergent. It has also a single logarithmically divergent two-point graph (the ``melonic tadpole''), again with a single vertex, which requires a mass renormalization (see Fig.~\ref{divergences}).

\begin{figure}[t]\centering
\includegraphics[width=0.55\textwidth]{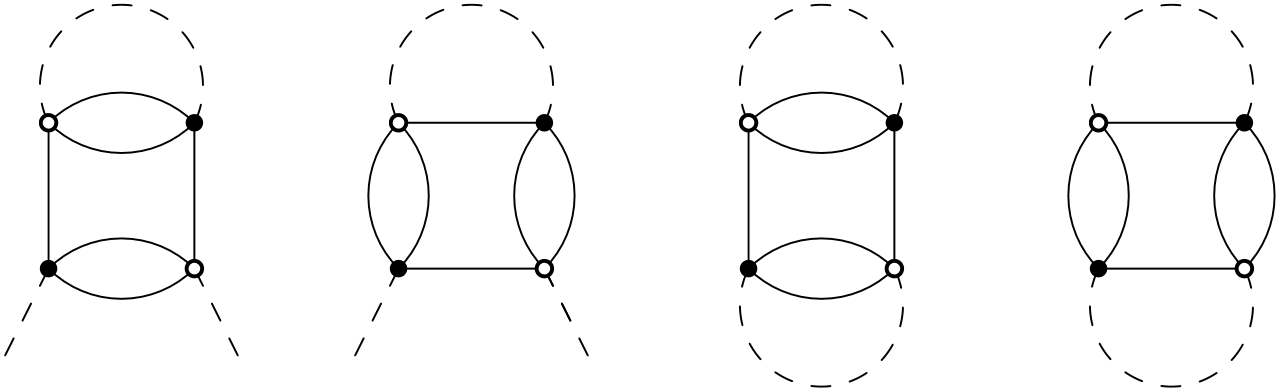}
 \caption{From left to right, the divergent tadpole, the convergent tadpole and the two vacuum divergent graphs.} \label{divergences}
\end{figure}

Therefore the proof of Borel summability for this model requires both a multi-loop vertex expansion and ICS bounds \cite{Delepouve:2014hfa}. Indeed insertions of intermediate f\/ields of dif\/ferent colors along a loop vertex no longer need to commute, since they do not commute with the propagators along a given loop vertex. Indeed these propagators may have now dif\/ferent momenta running through them, since there is no longer the conservation rule of the quartic melonic models with Boulatov projectors.

An interesting aspect of the renormalization of this model is that it is performed directly in the IFR representation. Hence the initial $\Tr \log (1 + i \sqrt \lambda \Sigma )$ interaction is replaced by a subtracted interaction $\Tr \log_2 (1 + i \sqrt \lambda \Sigma )$, where $\log_2 (1 + x ) := -x + \log (1 + x )$. In addition to the ICS bounds to get rid of resolvents in the trees created by the bosonic forest formula, there is also a~non-trivial argument to bound the non perturbative part left in the exponential by the MLVE. This non trivial ``non-perturbative'' bound can be traced to the divergence of the vacuum energy graphs, and can be considered a kind of tensor analog in the IFR representation of the famous Nelson bound for~$\phi^4_2$, which was the birth act of constructive f\/ield theory~\cite{nelson}.

\subsection{Quartic matrix model}

The paper \cite{Gurau:2014lua} deals with a Gaussian complex random matrix model with a quartic perturbation. Its partition function is def\/ined by the integral over complex $N\times N$ matrices $M$:
\begin{gather*}
 Z(\lambda,N)= \int dM e^{-\Tr( MM^{\dagger}-\frac{\lambda}{2N}MM^{\dagger}MM^{\dagger})} .
\end{gather*}

The model can be shown Borel summable with a single loop vertex expansion but the trees of that expansion form only a~part of the planar sector of the theory which is the leading order at large $N$. To extract this leading sector is possible through an additional expansion called the topological expansion in~\cite{Gurau:2014lua}. Starting form the LVE trees, one can perform a f\/irst expansion step on the resolvents, testing the presence of an additional f\/ield $\sigma$. This f\/ield is then contracted with the Gaussian measure, and the process is continued until a f\/irst non-planar Wick contraction appear. The part of the expansion for planar graphs does not diverge since the number of planar graphs at order~$n$ is bounded by $({\rm const})^n$.

Continuing this additional topological expansion until at most $p+1$ independent crossing, it is possible to extract the complete terms of the~$1/N$ expansion up to order $p$ and to prove that the remainder is still Borel summable, with a uniform bound in $N^{- (p+1)}$.

A similar expansion could be used for tensor models such as those of~\cite{Bonzom:2015axa}, which mix a melonic sector and a planar sector behavior.

\section{Conclusion: open questions}

Super-renormalizable and just renormalizable tensor f\/ield theories form a rich world \cite{Geloun:2013saa}. A~key physical issue is to move towards realistic quantum gravity models, and to better connect the ``tensor f\/ield theory'' approach to other approaches to quantum gravity. In parti\-cu\-lar understanding which tensor models admit a realistic emergent four dimensional (Euclidean) space time with a real time analytic continuation seems to us the key problem. However we shall not discuss this issue here, addressing the reader to our twin review~\cite{Rivasseau:2016zco} and focus here only on a brief discussion of the next technical steps of the constructive tensor program.

The next obvious step in this program is to build super-renormalizable theories with power counting of the same level than ordinary $\phi^4_3$, namely the $T^4_4$ model with propagator $(p^2 + 1)^{-1}$. This is well under way and it seems that it can be treated like~$T^4_3$ by just the same 2-jungle formula of the MLVE and ICS bounds; but the list of divergent graphs is much longer (there are 12 primitively divergent graphs) and the $\Tr \log$ interaction should therefore be pushed further, resulting in particular in much more complicated non perturbative estimates compared to~$T^4_3$~\cite{RVT}.

The model ${\rm U}(1)-T^4_5$ with Boulatov projectors has a comparable f\/inite amount of divergent graphs. It should be easier than $T^4_4$, since intermediate f\/ields can be represented by diagonal matrices in the momentum basis.

The next, more dif\/f\/icult step is the construction of a just renormalizable model such as the $T^4_5$ model or the ${\rm U}(1)-T^4_6$ model with Boulatov projector \cite{Lahoche:2015ola, Samary:2014oya,Samary:2012bw}. Both models are asymptotically free \cite{Geloun:2013saa,Rivasseau:2015ova,Samary:2013xla}, hence the construction should be possible but it is not clear whether it can be done with the same tools or if a more advanced expansion (such as a multiscale loop vertex expansion with higher-order jungle formulas) will be required.

Another important direction is to treat models with higher-order interactions, for instance the just renormalizable models with six-order melonic interactions introduced in \cite{BenGeloun:2011rc} and \cite{Carrozza:2013wda}. Here the starting point might be the intermediate f\/ield representation introduced in~\cite{Rivasseau:2010ke} and improved in \cite{LionniRivasseau}. Beware however that even in zero dimension we do not know yet how to perform a loop vertex expansion in this representation \cite{LionniRivasseau}.

We would like to complete this review by indicating two other challenging research directions for matrix models and non-commutative f\/ield theory

\begin{itemize}\itemsep=0pt

\item The f\/irst direction is to extend the analyticity domain in the coupling constant of quartic models beyond the cardioid characteristically obtained through the LVE. In the case of vector models we already know that this is possible and that the optimal angular domain of analyticity extends to~3$\pi$ (see Fig.~\ref{extendedcardioidfig} and the last section of~\cite{Gurau:2014vwa}). In matrix or tensor models nothing of this type is known.

This is a key issue for many reasons, in particular to understand the instanton cuts which are responsible for the usual non-perturbative ef\/fects of $\phi^4$ theory, and how they should disappear in the limit $N \to \infty$, unveiling a new singularity at f\/inite distance on the negative coupling constant which is the one responsible for the single and double scaling limits of quantum gravity in two dimensions, hence for the emergence of continuous surfaces~\cite{DiFrancesco:1993cyw}.

\item Construct non-perturbatively the full non-planar sector of the Grosse--Wulkenhaar mo\-del~\cite{Grosse:2004yu}. Although in a certain limit $\theta \to \infty$ of inf\/inite non-commutativity, this sector disappears, again it would be very interesting to understand rigorously how to construct the full theory at f\/inite $\theta$. It would be extremely interesting, as the theory is asymptotically safe, to check that this property persists at the non-perturbative level.

\end{itemize}

\subsection*{Acknowledgements}
We thank A.~Abdesselam, T.~Delepouve, R.~Gurau, V.~Lahoche, L.~Lionni, J.~Magnen, K.~Noui, M.~Smerlak, A.~Tanasa, F.~Vignes-Tourneret and ZhiTuo Wang for contributing to various aspects of this constructive program.

\pdfbookmark[1]{References}{ref}
\LastPageEnding

\end{document}